\documentclass{eptcs}

\usepackage{multicol}
\usepackage{enumerate}
\usepackage{enumitem}

\usepackage[nottoc]{tocbibind}
\usepackage{tikzit}
\input{tikz/zxw_styles.tikzdefs}

\tikzstyle{gbox}=[box, draw=black, shape=rectangle, fill={zx_green}, tikzit fill={rgb,255: red,181; green,215; blue,181}]
\tikzstyle{hbox}=[box, draw=black, shape=rectangle, fill=yellow, minimum size=.50em]
\tikzstyle{box}=[draw=black, shape=rectangle, fill=white, minimum size=.95em, inner sep=0.15em, scale=0.85, font={\footnotesize}]
\tikzstyle{gn}=[draw=black, shape=circle, fill={zx_green}, draw=black, inner sep=0.7mm, minimum width=0pt, minimum height=0pt, tikzit fill={rgb,255: red,181; green,215; blue,181}]
\tikzstyle{rn}=[gn, fill={zx_red}, draw=black, tikzit fill={rgb,255: red,232; green,165; blue,165}]
\tikzstyle{gn_phase}=[shape=rectangle, fill={zx_green}, draw=black, minimum size=1em, rounded corners=0.4em, inner sep=0.2em, outer sep=-0.2em, scale=0.75, font={\scriptsize\boldmath}, tikzit shape=circle, tikzit fill={rgb,255: red,181; green,215; blue,181}]
\tikzstyle{rn_phase}=[{gn_phase}, fill={zx_red}, draw=black, tikzit fill={rgb,255: red,232; green,165; blue,165}]
\tikzstyle{wn}=[{gn_phase}, fill=white, draw=black]
\tikzstyle{rtriang}=[shape=isosceles triangle, fill=yellow, draw=black, isosceles triangle stretches=true, inner sep=0.8pt, minimum width=0.25cm, minimum height=2mm]
\tikzstyle{ltriang}=[rtriang, shape=isosceles triangle, fill=yellow, draw=black, shape border rotate=180]
\tikzstyle{utriang}=[rtriang, shape=isosceles triangle, fill=yellow, draw=black, shape border rotate=90]
\tikzstyle{dtriang}=[rtriang, shape=isosceles triangle, fill=yellow, draw=black, shape border rotate=-90]
\tikzstyle{lmat}=[shape=signal, signal to=west, signal from=east, fill={zx_grey}, draw=black, minimum height=6pt, inner sep=1pt, font={\scriptsize  \boldmath}, tikzit fill=gray, tikzit category=GLA]
\tikzstyle{rmat}=[shape=signal, signal to=east, signal from=west, fill={zx_grey}, draw=black, minimum height=6pt, inner sep=1pt, font={\scriptsize  \boldmath}, tikzit fill=gray, tikzit category=GLA]
\tikzstyle{dmat}=[shape=signal, signal to=west, signal from=east, fill={zx_grey}, draw=black, minimum height=6pt, inner sep=1pt, font={\scriptsize  \boldmath}, tikzit fill=gray, tikzit category=GLA, rotate=90]
\tikzstyle{umat}=[shape=signal, signal to=east, signal from=west, fill={zx_grey}, draw=black, minimum height=6pt, inner sep=1pt, font={\scriptsize  \boldmath}, tikzit fill=gray, tikzit category=GLA, rotate=90]
\tikzstyle{uw}=[shape=isosceles triangle, isosceles triangle stretches=true, fill=black, draw=black, minimum width=0.4cm, minimum height=3mm, inner sep=1pt, shape border rotate=90]
\tikzstyle{dw}=[shape=isosceles triangle, isosceles triangle stretches=true, fill=black, draw=black, minimum width=0.4cm, minimum height=3mm, inner sep=1pt, shape border rotate=-90]
\tikzstyle{lw}=[shape=isosceles triangle, isosceles triangle stretches=true, fill=black, draw=black, minimum width=0.4cm, minimum height=3mm, inner sep=1pt, shape border rotate=135]
\tikzstyle{rw}=[shape=isosceles triangle, isosceles triangle stretches=true, fill=black, draw=black, minimum width=0.4cm, minimum height=3mm, inner sep=1pt, shape border rotate=-45]
\tikzstyle{d_split}=[shape=trapezium, fill=white, draw=black, minimum width=11pt, inner sep=1pt]
\tikzstyle{d_merge}=[shape=trapezium, fill=white, draw=black, minimum width=11pt, inner sep=1pt, rotate=180]
\tikzstyle{braceedge}=[decorate, decoration={brace, amplitude=2mm, raise=-1mm}]
\tikzstyle{wire label}=[font={\scriptsize}, auto]
\tikzstyle{label}=[font={\scriptsize}, text height=1ex, text depth=0.15ex]
\tikzstyle{left label}=[label, anchor=east, xshift=1.5mm]
\tikzstyle{right label}=[label, anchor=west, xshift=-1.5mm]
\tikzstyle{dbl rn}=[gn, fill={zx_red}, draw=black, tikzit fill={rgb,255: red,232; green,165; blue,165}, line width=0.5mm]
\tikzstyle{dbl rn_phase}=[{gn_phase}, fill={zx_red}, draw=black, tikzit fill={rgb,255: red,232; green,165; blue,165}, line width=0.5mm]

\tikzstyle{blue line}=[-, fill=none, draw={rgb,255: red,69; green,69; blue,255}, tikzit draw=blue]
\tikzstyle{thick line}=[-, line width=0.5mm]
\tikzstyle{filllayer}=[-, fill={rgb,255: red,178; green,227; blue,247}, opacity=0.5]
\tikzstyle{bg}=[-, filllayer, fill={rgb,255: red,255; green,250; blue,200}, draw=lightgray]
\tikzstyle{white}=[-, filllayer, fill=white, opacity=1]

\usepackage{todonotes}

\usepackage[title, titletoc]{appendix}

\usepackage[english]{babel}

\usepackage{amsthm, amsfonts, amssymb} 
\usepackage{amsmath}
\numberwithin{equation}{section}

\usepackage{hyperref}
\usepackage{cleveref}
\usepackage{physics}
\usepackage{stmaryrd}

\usepackage{tikzit}

\newtheorem{lemma}{Lemma}[section]
\newtheorem{thm}{Theorem}[section]
\theoremstyle{definition}
\newtheorem{definition}{Definition}[section]
\newtheorem{prop}{Proposition}
\newtheorem*{prop*}{Proposition}

\newtheorem{remark}{Remark}

\newcommand{\redzero}{\begin{tikzpicture}
	\begin{pgfonlayer}{nodelayer}
		\node [style=rn] (0) at (0, 0) {};
		\node [style=none] (1) at (0, -0.5) {};
	\end{pgfonlayer}
	\begin{pgfonlayer}{edgelayer}
		\draw (0) to (1.center);
	\end{pgfonlayer}
\end{tikzpicture}
}

\newcommand{\redpi}{\begin{tikzpicture}
	\begin{pgfonlayer}{nodelayer}
		\node [style={rn_phase}] (0) at (0, 0) {$\pi$};
		\node [style=none] (1) at (0, -0.5) {};
	\end{pgfonlayer}
	\begin{pgfonlayer}{edgelayer}
		\draw (0) to (1.center);
	\end{pgfonlayer}
\end{tikzpicture}
}

\newcommand{\redzeroup}{\begin{tikzpicture}
	\begin{pgfonlayer}{nodelayer}
		\node [style={rn}] (0) at (0, 0) {};
		\node [style=none] (1) at (0, 0.5) {};
	\end{pgfonlayer}
	\begin{pgfonlayer}{edgelayer}
		\draw (0) to (1.center);
	\end{pgfonlayer}
\end{tikzpicture}
}

\newcommand{\greenzeroup}{\begin{tikzpicture}
	\begin{pgfonlayer}{nodelayer}
		\node [style={gn}] (0) at (0, 0) {};
		\node [style=none] (1) at (0, 0.5) {};
	\end{pgfonlayer}
	\begin{pgfonlayer}{edgelayer}
		\draw (0) to (1.center);
	\end{pgfonlayer}
\end{tikzpicture}
}

\newcommand{\zspid}[1][0.4]{\begin{tikzpicture}[scale=#1]
	\begin{pgfonlayer}{nodelayer}
		\node [style=gn] (0) at (0, 0) {};
		\node [style=none] (1) at (0.5, -0.5) {};
		\node [style=none] (2) at (-0.5, -0.5) {};
		\node [style=none] (3) at (0, 0.75) {};
	\end{pgfonlayer}
	\begin{pgfonlayer}{edgelayer}
		\draw (0) to (1.center);
		\draw (2.center) to (0);
		\draw (3.center) to (0);
	\end{pgfonlayer}
\end{tikzpicture}
}

\newcommand{\wspid}[1][0.4]{\begin{tikzpicture}[scale=#1]
	\begin{pgfonlayer}{nodelayer}
		\node [style=uw] (0) at (0, 0) {};
		\node [style=none] (1) at (0.5, -0.5) {};
		\node [style=none] (2) at (-0.5, -0.5) {};
		\node [style=none] (3) at (0, 0.75) {};
	\end{pgfonlayer}
	\begin{pgfonlayer}{edgelayer}
		\draw (0) to (1.center);
		\draw (2.center) to (0);
		\draw (3.center) to (0);
	\end{pgfonlayer}
\end{tikzpicture}
}

\newcommand{\coW}[1][0.4]{\begin{tikzpicture}[scale=#1]
	\begin{pgfonlayer}{nodelayer}
		\node [style=dw] (0) at (0, 0) {};
		\node [style=none] (1) at (0.5, 0.5) {};
		\node [style=none] (2) at (-0.5, 0.5) {};
		\node [style=none] (3) at (0, -0.75) {};
	\end{pgfonlayer}
	\begin{pgfonlayer}{edgelayer}
		\draw (0) to (1.center);
		\draw (2.center) to (0);
		\draw (3.center) to (0);
	\end{pgfonlayer}
\end{tikzpicture}
}

\newcommand{\zspids}[1][0.4]{\begin{tikzpicture}[scale=#1]
	\begin{pgfonlayer}{nodelayer}
		\node [style=gn] (0) at (0, 0) {};
		\node [style=none] (1) at (0.5, -0.5) {};
		\node [style=none] (2) at (-0.5, -0.5) {};
		\node [style=none] (3) at (0, 0.75) {};
            \node [style=label] (4) at (0, -0.5) {$\ldots$};
	\end{pgfonlayer}
	\begin{pgfonlayer}{edgelayer}
		\draw (0) to (1.center);
		\draw (2.center) to (0);
		\draw (3.center) to (0);
	\end{pgfonlayer}
\end{tikzpicture}
}

\newcommand{\wspids}[1][0.4]{\begin{tikzpicture}[scale=#1]
	\begin{pgfonlayer}{nodelayer}
		\node [style=uw] (0) at (0, 0) {};
		\node [style=none] (1) at (0.5, -0.5) {};
		\node [style=none] (2) at (-0.5, -0.5) {};
		\node [style=none] (3) at (0, 0.75) {};
            \node [style=label] (4) at (0, -0.5) {$\ldots$};
	\end{pgfonlayer}
	\begin{pgfonlayer}{edgelayer}
		\draw (0) to (1.center);
		\draw (2.center) to (0);
		\draw (3.center) to (0);
	\end{pgfonlayer}
\end{tikzpicture}
}

\newcommand{\coWs}[1][0.4]{\begin{tikzpicture}[scale=#1]
	\begin{pgfonlayer}{nodelayer}
		\node [style=dw] (0) at (0, 0) {};
		\node [style=none] (1) at (0.5, 0.5) {};
		\node [style=none] (2) at (-0.5, 0.5) {};
		\node [style=none] (3) at (0, -0.75) {};
		\node [style=label] (4) at (0, 0.6) {$\ldots$};
	\end{pgfonlayer}
	\begin{pgfonlayer}{edgelayer}
		\draw (0) to (1.center);
		\draw (2.center) to (0);
		\draw (3.center) to (0);
	\end{pgfonlayer}
\end{tikzpicture}}

\newcommand{\hgate}[1][0.5]{\begin{tikzpicture}[scale=#1]
	\begin{pgfonlayer}{nodelayer}
		\node [style=hbox] (0) at (0, 0) {};
		\node [style=none] (1) at (0, 0.75) {};
		\node [style=none] (2) at (0, -0.75) {};
	\end{pgfonlayer}
	\begin{pgfonlayer}{edgelayer}
		\draw (2.center) to (0);
		\draw (0) to (1.center);
	\end{pgfonlayer}
\end{tikzpicture}
}

\newcommand{\numbergate}[1][a]{\begin{tikzpicture}[scale=0.5]
	\begin{pgfonlayer}{nodelayer}
		\node [style=gbox] (0) at (0, 0) {$#1$};
		\node [style=none] (3) at (0, 0.75) {};
		\node [style=none] (4) at (0, -0.75) {};
	\end{pgfonlayer}
	\begin{pgfonlayer}{edgelayer}
		\draw (3.center) to (0);
		\draw (0) to (4.center);
	\end{pgfonlayer}
\end{tikzpicture}}

\newcommand{\zeroproj}[1][0.5]{\begin{tikzpicture}[scale=#1]
	\begin{pgfonlayer}{nodelayer}
		\node [style=rn] (0) at (0, 0) {};
		\node [style=none] (3) at (0, 0.75) {};
		\node [style=rn] (4) at (0.75, -0.75) {};
		\node [style=none] (5) at (0.75, -1.5) {};
		\node [style=label] (6) at (0, -1) {$\ldots$};
		\node [style=rn] (7) at (-0.75, -0.75) {};
		\node [style=none] (8) at (-0.75, -1.5) {};
	\end{pgfonlayer}
	\begin{pgfonlayer}{edgelayer}
		\draw (3.center) to (0);
		\draw (5.center) to (4);
		\draw (8.center) to (7);
	\end{pgfonlayer}
\end{tikzpicture}
}

\newcommand{\numberstate}[1][a]{\begin{tikzpicture}[scale=0.4]
	\begin{pgfonlayer}{nodelayer}
		\node [style=gbox] (0) at (0, 0) {$#1$};
		\node [style=none] (1) at (0, 1) {};
	\end{pgfonlayer}
	\begin{pgfonlayer}{edgelayer}
		\draw (1.center) to (0);
	\end{pgfonlayer}
\end{tikzpicture}
}

\newcommand{\idwire}[1][0.5]{\begin{tikzpicture}[scale=#1]
	\begin{pgfonlayer}{nodelayer}
		\node [style=none] (0) at (0, 0) {};
		\node [style=none] (1) at (0, -1) {};
	\end{pgfonlayer}
	\begin{pgfonlayer}{edgelayer}
		\draw (0.center) to (1.center);
	\end{pgfonlayer}
\end{tikzpicture}
}

\newcommand{\swap}[1][0.25]{\begin{tikzpicture}[scale=#1]
	\begin{pgfonlayer}{nodelayer}
		\node [style=none] (0) at (-1, -1) {};
		\node [style=none] (1) at (1, 1) {};
		\node [style=none] (2) at (1, -1) {};
		\node [style=none] (3) at (-1, 1) {};
	\end{pgfonlayer}
	\begin{pgfonlayer}{edgelayer}
		\draw [in=90, out=-90, looseness=0.75] (3.center) to (2.center);
		\draw [in=-90, out=90, looseness=0.75] (0.center) to (1.center);
	\end{pgfonlayer}
\end{tikzpicture}}

\newcommand{\ccap}[1][0.5]{\begin{tikzpicture}[scale=#1]
	\begin{pgfonlayer}{nodelayer}
		\node [style=none] (0) at (0, 0) {};
		\node [style=none] (1) at (1, 0) {};
	\end{pgfonlayer}
	\begin{pgfonlayer}{edgelayer}
		\draw [bend left=90, looseness=1.75] (0.center) to (1.center);
	\end{pgfonlayer}
\end{tikzpicture}}

\newcommand{\ccup}[1][0.5]{\begin{tikzpicture}[scale=#1]
	\begin{pgfonlayer}{nodelayer}
		\node [style=none] (0) at (0, 0) {};
		\node [style=none] (1) at (1, 0) {};
	\end{pgfonlayer}
	\begin{pgfonlayer}{edgelayer}
		\draw [bend right=90, looseness=1.75] (0.center) to (1.center);
	\end{pgfonlayer}
\end{tikzpicture}}

\newcommand{\ddiag}[1][0.5]{\begin{tikzpicture}[scale=#1]
	\begin{pgfonlayer}{nodelayer}
		\node [style=none] (0) at (-1, 0.5) {};
		\node [style=none] (1) at (1, 0.5) {};
		\node [style=none] (2) at (-1, -0.5) {};
		\node [style=none] (3) at (1, -0.5) {};
		\node [style=none] (4) at (0, 0.5) {};
		\node [style=none] (5) at (-0.5, -0.5) {};
		\node [style=none] (6) at (0.5, -0.5) {};
		\node [style=none] (7) at (0, 0) {$D$};
		\node [style=none] (8) at (-0.5, -1.5) {};
		\node [style=none] (9) at (0, 1.5) {};
		\node [style=none] (10) at (0.5, -1.5) {};
		\node [style=label] (11) at (0, -1.25) {$\ldots$};
	\end{pgfonlayer}
	\begin{pgfonlayer}{edgelayer}
		\draw (0.center) to (1.center);
		\draw (1.center) to (3.center);
		\draw (3.center) to (2.center);
		\draw (2.center) to (0.center);
		\draw (5.center) to (8.center);
		\draw (6.center) to (10.center);
		\draw (4.center) to (9.center);
	\end{pgfonlayer}
\end{tikzpicture}}

\newcommand{\xsq}[1][0.4]{\begin{tikzpicture}[scale=#1]
	\begin{pgfonlayer}{nodelayer}
		\node [style=none] (0) at (0, -1.5) {};
		\node [style=dw] (1) at (0, -0.75) {};
		\node [style=gn] (2) at (0, 0.5) {};
		\node [style=none] (3) at (0, 1.25) {};
	\end{pgfonlayer}
	\begin{pgfonlayer}{edgelayer}
		\draw (3.center) to (2);
		\draw [bend left] (2) to (1);
		\draw [bend right] (2) to (1);
		\draw (1) to (0.center);
	\end{pgfonlayer}
\end{tikzpicture}
}

\newcommand{\eqq}[1]{~\overset{\left(#1\right)}{=}~}
\newcommand{\linesep}{10pt}
\newcommand{\neweqline}{\displaybreak[0] \\[\linesep]}

\newcommand{\ketz}{|0\rangle}
\newcommand{\keto}{|1\rangle}
\newcommand{\braz}{\langle0|}
\newcommand{\brao}{\langle1|}

\newcommand{\polyring}[1][n]{\mathcal{P}_{#1}}
\newcommand{\csring}[1][n]{\tilde{S}_{#1}}

\usepackage{multicol, amsthm, amsfonts}

\definecolor{zx_grey}{RGB}{211,211,211}
\definecolor{zx_pink}{RGB}{255,200,240}
\definecolor{zx_green}{RGB}{216,248,216}
\definecolor{zx_green2}{RGB}{180,248,180}
\definecolor{zx_red}{RGB}{232,165,165}

\newcommand{\lowerbox}[2][5]{\raisebox{-#1pt}{#2}}

\newcommand{\Control}{\textsf{Ctrl}}

\title{Algebraic Structure of Quantum Controlled States and Operators}
\author{
Edwin Agnew 
\institute{Department of Computer Science\\ University of Oxford}
\and 
Lia Yeh
\institute{Department of Computer Science and Technology\\ University of Cambridge\footnote{This work was done while LY was at the University of Oxford.}}
\email{\quad ly404@cam.ac.uk}
\and 
Richie Yeung
\institute{Department of Computer Science\\ University of Oxford}
\email{\quad richie.yeung@cs.ox.ac.uk}
}
\def\titlerunning{Algebraic Structure of Quantum Controlled States and Operators}
\def\authorrunning{E. Agnew, L. Yeh, R. Yeung}

\begin{document}

\maketitle

\begin{abstract}
    Quantum control is an important logical primitive of quantum computing programs, and an important concept for equational reasoning in quantum graphical calculi.
    We show that controlled diagrams in the ZXW-calculus admit rich algebraic structure. The perspective of the higher-order map $\Control$ recovers the standard notion of quantum controlled gates, while respecting sequential and parallel composition and multiple-control.

    In this work, we prove that controlled square matrices form a ring and therefore satisfy powerful rewrite rules. We also show that controlled states form a ring isomorphic to multilinear polynomials. Putting these together, we have completeness for polynomials over same-size square matrices.
    These properties supply new rewrite rules that make factorisation of arbitrary qubit Hamiltonians achievable inside a single graphical calculus.
\end{abstract}

\section{Introduction}
Controlling or branching to different possible linear maps, relations, or channels is important across quantum information and quantum computation, and has been studied through many different approaches. In quantum algorithms common techniques are block encodings~\cite{gilyen2019qsvt, rall2020qalgsblockenc} and linear combination of unitaries~\cite{childs2012hamsimlcu}, while a number of formalisations have included routed quantum circuits~\cite{vanrietvelde2021routed}, the many-worlds calculus~\cite{chardonnet2023manyworlds}, categorifying signal flow diagrams~\cite{baez2015categoriesctrl}, and classical and quantum control in quantum modal logic~\cite{sati2023quantummonadology}.

The question we are interested in is how quantum graphical calculi such as the ZX~\cite{coecke2011zx}, ZW~\cite{coecke2010zw}, and ZH~\cite{backens2019zh} calculus can be augmented to support properties of quantum control.
An early use of controlled state diagrams was for proving constructive and rational angle ZX calculus completeness~\cite{jeandel2018zxconstructive}. More recently, controlled state and controlled matrix diagrams have been applied to addition and differentiation of ZX diagrams~\cite{jeandel2024adddiffzx}, differentiating and integrating ZX diagrams for quantum machine learning~\cite{wang2022diffintzx}, Hamiltonian exponentiation and simulation~\cite{shaikh2022sum}, non-linear optical quantum computing~\cite{de2023light}, and fermion-to-qubit mappings~\cite{mcdowallrose2025f2q}. To sum ZX diagrams, these works have used controlled states along with the W generator from the ZW calculus.

Given how useful controlled diagrams are, a natural question to ask is why they work: What their underlying mathematical structures are, and which equational rewrites they satisfy.
Before descending into the details, we present the relationship between controlled matrices, and the conventional notion of quantum controlled gates. We show how to recover the latter from the former, through the higher-order map $\Control$ that maps square matrices to controlled square matrices.

First of our main results, we show that the set of all controlled $n$-partite states defines a commutative ring. We introduce $\boxplus$ which defines an Abelian group and $\boxtimes$ which defines a commutative monoid, and show that $\boxtimes$ distributes over $\boxplus$. The fragment of the qubit ZW calculus corresponding to controlled states, which we call \emph{arithmetic diagrams} hence defines a ring which we prove is isomorphic to multilinear polynomials $\mathbb{C}[x_1,...,x_n]/({x_1}^2,...,{x_n}^2)$. We prove completeness for all operations in this ring by an algorithm to rewrite any arithmetic diagram to what we call \emph{polynormal form (PNF)}. Likewise, we show that the set of all controlled square matrices on $n$ qubits defines a non-commutative ring $(\tilde{M^n}, \lowerbox[10]{\wspids}, \lowerbox[10]{\zspids})$, and we prove a second completeness result for this ring.

To the qubit ZXW-calculus, whose completeness for arbitrary finite dimension was proven in Ref.~\cite{poor2023completeness}, we add controlled states and controlled square matrices as new generators, along with the rewrite rules for completeness over their rings.
Now in the same calculus, we plug controlled states into each control wire of controlled square matrices, and show using only the rewrite rules so far that this is isomorphic to multivariate polynomials over same-size square matrices. Commutativity of controlled square matrices holds in the special case that the controls target mutually exclusive sectors, allowing copying of arbitrary controlled diagrams. As a result, we can factor multivariate polynomials over same-size square matrices. This third completeness result means we now have the ability to factor any Hamiltonian in the ZXW-calculus~\cite{shaikh2022sum}, even with all its terms black-boxed.
In sum, these algebraic properties of quantum control give rise to powerful new graphical reasoning capabilities.

\section{Preliminaries}

This section introduces the ZXW-calculus, and how controlled diagrams are defined in it. The ZXW-calculus is a graphical formalism for qudit computation, unifying the ZX and ZW calculi and synthesising their relative strengths. The ZXW-calculus consists of diagrams built from a small number of generators and equipped with a complete set of rewrite rules, which enables all equalities between linear maps to be proven diagrammatically.  Diagrams are to be read top to bottom and left to right.

\subsection{The ZXW-Calculus}
The qubit ZXW-calculus is built from the following generators:

\begin{gather}
  \left\llbracket \quad \lowerbox{\idwire[0.5]} \;\;\; \right\rrbracket ~=~ \begin{bmatrix}1 & 0 \\ 0 & 1\end{bmatrix} \quad
  \left\llbracket \; \lowerbox[10]{\swap[0.4]} \; \right\rrbracket ~=~ \begin{bmatrix} 1 & 0 & 0 & 0 \\ 0 & 0 & 1 & 0 \\ 0 & 1 & 0 & 0 \\ 0 & 0 & 0 & 1\end{bmatrix} \quad
  \Big\llbracket \; \lowerbox[3]{\ccap} \; \Big\rrbracket ~=~ \begin{bmatrix} 1 \\ 0 \\ 0 \\ 1\end{bmatrix} \quad
  \left\llbracket \; \lowerbox[5]{\ccup} \; \right\rrbracket ~=~ \begin{bmatrix} 1 & 0 & 0 & 1 \end{bmatrix} \\
  \left\llbracket \;\; \begin{tikzpicture}
	\begin{pgfonlayer}{nodelayer}
		\node [style=gbox] (0) at (0, 0) {$c$};
		\node [style=none] (1) at (-1, 1) {};
		\node [style=none] (2) at (1, 1) {};
		\node [style=none] (3) at (-1, -1) {};
		\node [style=none] (4) at (1, -1) {};
		\node [style=none] (5) at (-1, -1) {};
		\node [style=none] (6) at (-1, -1) {};
		\node [style=label] (7) at (0, -0.75) {$\ldots$};
		\node [style=label] (8) at (0, 0.75) {$\ldots$};
		\node [style=label] (9) at (0, 1.25) {$n$};
		\node [style=label] (10) at (0, -1.25) {$m$};
	\end{pgfonlayer}
	\begin{pgfonlayer}{edgelayer}
		\draw [bend left=15] (0) to (1.center);
		\draw [bend right=15] (0) to (2.center);
		\draw [bend left=15] (6.center) to (0);
		\draw [bend left=15] (0) to (4.center);
	\end{pgfonlayer}
\end{tikzpicture}
 \;\; \right\rrbracket ~=~ |0^m\rangle\langle0^n| + c|1^m\rangle\langle1^n|, c \in \mathbb{C} \qquad
  \left\llbracket \;\; \raisebox{-10pt}{\wspid[0.7]} \;\; \right\rrbracket ~=~ |00\rangle \braz + |01\rangle \brao + |10\rangle \brao \\
  \left\llbracket \;\;\; \raisebox{-8pt}{\hgate} \;\; \right\rrbracket = \frac{1}{\sqrt{2}}\begin{bmatrix}1 & 1 \\ 1 & -1\end{bmatrix}
\end{gather}

For simplicity, we introduce the following additional notation:

\begin{gather}
  \begin{tikzpicture}
	\begin{pgfonlayer}{nodelayer}
		\node [style=gbox] (0) at (1.5, 0) {$e^{i\alpha}$};
		\node [style=none] (1) at (2.5, 1) {};
		\node [style=none] (2) at (0.5, 1) {};
		\node [style=none] (3) at (2.5, -1) {};
		\node [style=none] (4) at (0.5, -1) {};
		\node [style=none] (5) at (2.5, -1) {};
		\node [style=none] (6) at (2.5, -1) {};
		\node [style=label] (7) at (1.5, -0.75) {$\ldots$};
		\node [style=label] (8) at (1.5, 0.75) {$\ldots$};
		\node [style={gn_phase}] (9) at (-1.75, 0) {$\alpha$};
		\node [style=none] (10) at (-0.75, 1) {};
		\node [style=none] (11) at (-2.75, 1) {};
		\node [style=none] (12) at (-0.75, -1) {};
		\node [style=none] (13) at (-2.75, -1) {};
		\node [style=none] (14) at (-0.75, -1) {};
		\node [style=none] (15) at (-0.75, -1) {};
		\node [style=label] (16) at (-1.75, -0.75) {$\ldots$};
		\node [style=label] (17) at (-1.75, 0.75) {$\ldots$};
		\node [style=none] (18) at (-0.25, 0) {$:=$};
		\node [style={gn_phase}] (19) at (11.75, 0) {$0$};
		\node [style=none] (20) at (10.75, 1) {};
		\node [style=none] (21) at (12.75, 1) {};
		\node [style=none] (22) at (10.75, -1) {};
		\node [style=none] (23) at (12.75, -1) {};
		\node [style=none] (24) at (10.75, -1) {};
		\node [style=none] (25) at (10.75, -1) {};
		\node [style=label] (26) at (11.75, -0.75) {$\ldots$};
		\node [style=label] (27) at (11.75, 0.75) {$\ldots$};
		\node [style=gn] (28) at (8, 0) {};
		\node [style=none] (29) at (7, 1) {};
		\node [style=none] (30) at (9, 1) {};
		\node [style=none] (31) at (7, -1) {};
		\node [style=none] (32) at (9, -1) {};
		\node [style=none] (33) at (7, -1) {};
		\node [style=none] (34) at (7, -1) {};
		\node [style=label] (35) at (8, -0.75) {$\ldots$};
		\node [style=label] (36) at (8, 0.75) {$\ldots$};
		\node [style=none] (37) at (9.75, 0) {$:=$};
		\node [style=none] (38) at (13.5, 0) {$=$};
		\node [style=gbox] (39) at (15.25, 0) {$1$};
		\node [style=none] (40) at (14.25, 1) {};
		\node [style=none] (41) at (16.25, 1) {};
		\node [style=none] (42) at (14.25, -1) {};
		\node [style=none] (43) at (16.25, -1) {};
		\node [style=none] (44) at (14.25, -1) {};
		\node [style=none] (45) at (14.25, -1) {};
		\node [style=label] (46) at (15.25, -0.75) {$\ldots$};
		\node [style=label] (47) at (15.25, 0.75) {$\ldots$};
	\end{pgfonlayer}
	\begin{pgfonlayer}{edgelayer}
		\draw [bend right=15] (0) to (1.center);
		\draw [bend left=15] (0) to (2.center);
		\draw [bend right=15] (6.center) to (0);
		\draw [bend right=15] (0) to (4.center);
		\draw [bend right=15] (9) to (10.center);
		\draw [bend left=15] (9) to (11.center);
		\draw [bend right=15] (15.center) to (9);
		\draw [bend right=15] (9) to (13.center);
		\draw [bend left=15] (19) to (20.center);
		\draw [bend right=15] (19) to (21.center);
		\draw [bend left=15] (25.center) to (19);
		\draw [bend left=15] (19) to (23.center);
		\draw [bend left=15] (28) to (29.center);
		\draw [bend right=15] (28) to (30.center);
		\draw [bend left=15] (34.center) to (28);
		\draw [bend left=15] (28) to (32.center);
		\draw [bend left=15] (39) to (40.center);
		\draw [bend right=15] (39) to (41.center);
		\draw [bend left=15] (45.center) to (39);
		\draw [bend left=15] (39) to (43.center);
	\end{pgfonlayer}
\end{tikzpicture}
\\
  \begin{tikzpicture}
	\begin{pgfonlayer}{nodelayer}
		\node [style={gn_phase}] (0) at (1.75, 0) {$\alpha$};
		\node [style=hbox] (1) at (0.75, 0.75) {};
		\node [style=hbox] (2) at (2.75, 0.75) {};
		\node [style=none] (3) at (0.75, -0.75) {};
		\node [style=hbox] (4) at (2.75, -0.75) {};
		\node [style=none] (5) at (0.75, -0.75) {};
		\node [style=hbox] (6) at (0.75, -0.75) {};
		\node [style=label] (7) at (1.75, -0.75) {$\ldots$};
		\node [style=label] (8) at (1.75, 0.75) {$\ldots$};
		\node [style={rn_phase}] (9) at (-2, 0) {$\alpha$};
		\node [style=none] (10) at (-3, 1) {};
		\node [style=none] (11) at (-1, 1) {};
		\node [style=none] (12) at (-3, -1) {};
		\node [style=none] (13) at (-1, -1) {};
		\node [style=none] (14) at (-3, -1) {};
		\node [style=none] (15) at (-3, -1) {};
		\node [style=label] (16) at (-2, -0.75) {$\ldots$};
		\node [style=label] (17) at (-2, 0.75) {$\ldots$};
		\node [style=none] (18) at (-0.25, 0) {$:=$};
		\node [style=none] (19) at (0.75, 1.5) {};
		\node [style=none] (20) at (2.75, 1.5) {};
		\node [style=none] (21) at (0.75, -1.5) {};
		\node [style=none] (22) at (2.75, -1.5) {};
	\end{pgfonlayer}
	\begin{pgfonlayer}{edgelayer}
		\draw [bend left=15] (0) to (1);
		\draw [bend right=15] (0) to (2);
		\draw [bend left=15] (6) to (0);
		\draw [bend left=15] (0) to (4);
		\draw [bend left=15] (9) to (10.center);
		\draw [bend right=15] (9) to (11.center);
		\draw [bend left=15] (15.center) to (9);
		\draw [bend left=15] (9) to (13.center);
		\draw (19.center) to (1);
		\draw (20.center) to (2);
		\draw (6) to (21.center);
		\draw (4) to (22.center);
	\end{pgfonlayer}
\end{tikzpicture}
\qquad \qquad \qquad \qquad
  \raisebox{-10pt}{\coWs[0.7]} ~:=~ \begin{tikzpicture}
	\begin{pgfonlayer}{nodelayer}
		\node [style=uw] (0) at (0, 0) {};
		\node [style=none] (1) at (0, 1) {};
		\node [style=none] (2) at (-0.75, -1) {};
		\node [style=none] (3) at (0.75, -1) {};
		\node [style=none] (4) at (2.25, 1.25) {};
		\node [style=none] (5) at (-1.75, -2) {};
		\node [style=label] (6) at (0, -0.75) {$\cdots$};
		\node [style=none] (7) at (1.25, 1.25) {};
		\node [style=none] (8) at (2.25, -1) {};
		\node [style=none] (10) at (1.25, -1) {};
		\node [style=none] (11) at (-1.75, 1) {};
	\end{pgfonlayer}
	\begin{pgfonlayer}{edgelayer}
		\draw (1.center) to (0);
		\draw [bend right=15] (0) to (2.center);
		\draw [bend left=15] (0) to (3.center);
		\draw (8.center) to (4.center);
		\draw [bend right=90] (3.center) to (8.center);
		\draw (10.center) to (7.center);
		\draw [bend right=90, looseness=0.75] (2.center) to (10.center);
		\draw [bend left=90, looseness=0.75] (11.center) to (1.center);
		\draw (11.center) to (5.center);
	\end{pgfonlayer}
\end{tikzpicture}

\end{gather}

Equations in ZXW apply diagrammatic rewrite rules which prove equalities of the underlying matrices. Omitted from Figure~\ref{fig:zxw_rules} are the \emph{structural rules} ubiquitous to quantum graphical calculi, such as that swapping twice yields the two-qubit identity and that an S-shaped cup and cap pair yields the one-qubit identity. These collectively are referred to as the \emph{Only Connectivity Matters} rule, which governs that in this calculus the following operations preserve semantic equality: So long as all connections are preserved between all inputs, outputs, and generators of the diagram, the spatial position and orientation of each generator can be varied, and wires are free to cross or bend as they please.
The Z and X generators are symmetric with respect to swapping two inputs, while the Z, X, and W generators are symmetric with respect to swapping two outputs:
\begin{equation}
  \scalebox{0.9}{\input{tikz/axioms/wsymetrydit.tikz}}
  \tag{Sym}\label{rule:Sym}
\end{equation}
However, unlike the Z and X generators, the W generator is not symmetric with respect to swapping an input and an output due to the following:
\begin{equation}
  \scalebox{0.9}{\input{tikz/axioms/w-asym.tikz}}
  \tag{Asym}\label{rule:Asym}
\end{equation}
For this reason, care must be taken that exactly one wire of each W generator is unambiguously its exactly one input, drawn aligned with one point of the triangle.

The complete rule set of the qubit ZXW-calculus from Ref.~\cite{poor2023completeness}, and our new rules for controlled states and controlled square matrices, are presented in Figure \ref{fig:zxw_rules}. Several important lemmas are found in Appendix \ref{sec:applem}.

\begin{figure}[htbp]
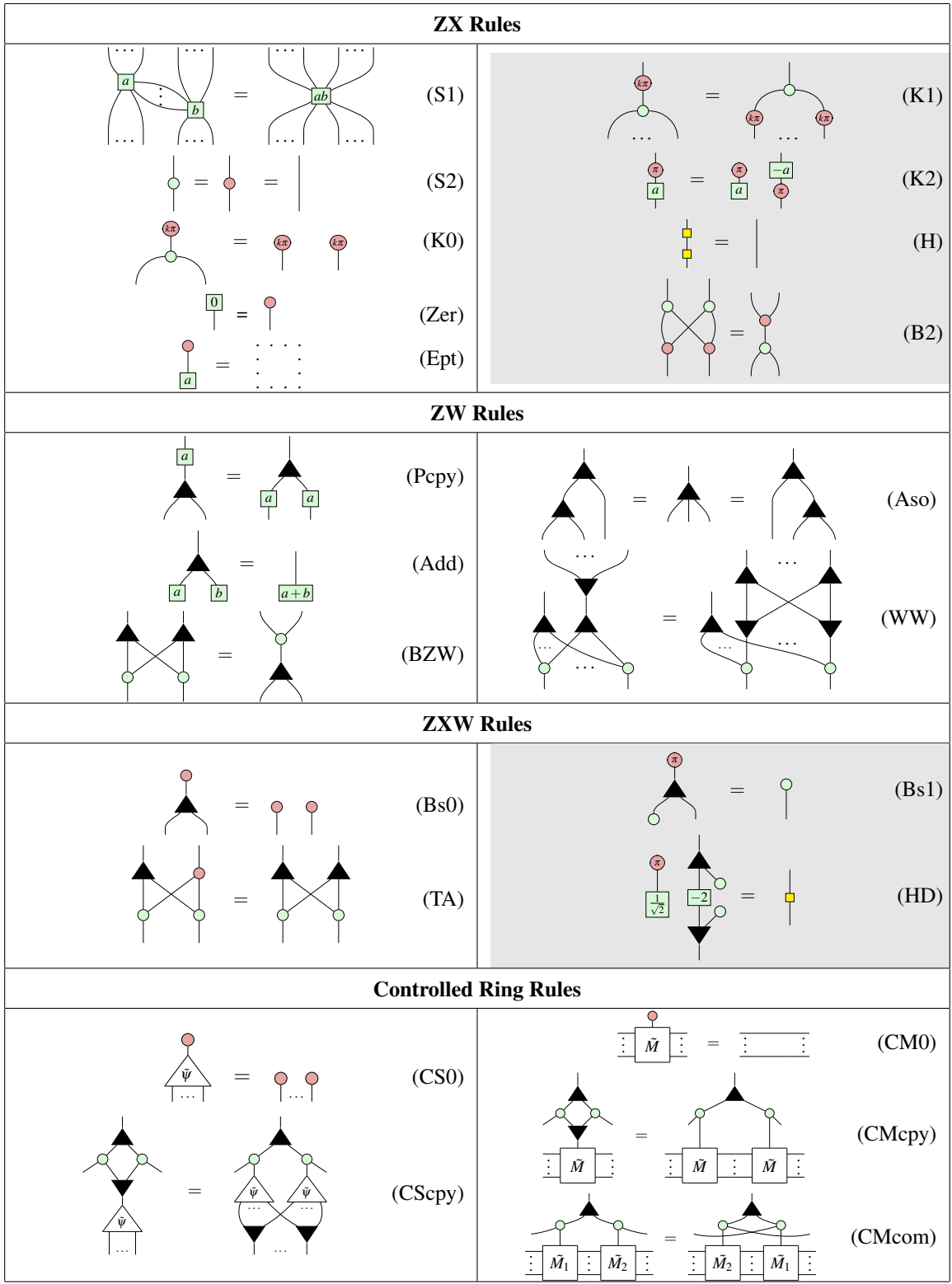

  \centering
  \renewcommand{\arraystretch}{1.5}
  \begin{tabular}{|p{0.45\textwidth}|p{0.45\textwidth}|}
    \hline
    \multicolumn{2}{|c|}{\textbf{ZX Rules}} \\
    \hline
    \begin{minipage}{\linewidth}
      \vspace{-1em}
      \begin{gather}
        \scalebox{0.9}{\input{tikz/axioms/s1.tikz}}
        \tag{S1}\label{rule:S1}\\
        \scalebox{0.9}{\input{tikz/axioms/s2.tikz}}
        \tag{S2}\label{rule:S2} \\
        \scalebox{0.9}{\input{tikz/axioms/k0copy.tikz}}
        \tag{K0}\label{rule:K0} \\
        \scalebox{0.9}{\input{tikz/axioms/zerotoreddit0.tikz}}
        \tag{Zer}\label{rule:Zer} \\
        \scalebox{0.9}{\input{tikz/axioms/rdotaemptydit0.tikz}}
        \tag{Ept}\label{rule:Ept}
      \end{gather}
    \end{minipage} &
    \noindent\colorbox{gray!20}{%
    \begin{minipage}{\linewidth}
      \vspace{-1em}
      \begin{gather}
        \scalebox{0.9}{\input{tikz/axioms/pimultiplecpdit.tikz}}
        \tag{K1}\label{rule:K1} \\
        \scalebox{0.9}{\input{tikz/axioms/k2adit.tikz}}
        \tag{K2}\label{rule:K2} \\
        \scalebox{0.9}{\input{tikz/axioms/h_id.tikz}}
        \tag{H}\label{rule:H} \\
        \scalebox{0.9}{\input{tikz/axioms/b2.tikz}}
        \tag{B2}\label{rule:B2}
      \end{gather}
    \end{minipage}} \\
    \hline
    \multicolumn{2}{|c|}{\textbf{ZW Rules}} \\
    \hline
    \begin{minipage}{\linewidth}
      \vspace{-1em}
      \begin{gather}
        \scalebox{0.9}{\input{tikz/axioms/phasecopydit.tikz}}
        \tag{Pcpy}\label{rule:Pcpy} \\
        \scalebox{0.9}{\input{tikz/axioms/additiondit.tikz}}
        \tag{Add}\label{rule:Add} \\
        \scalebox{0.9}{\input{tikz/axioms/w-bialgebra.tikz}}
        \tag{BZW}\label{rule:BZW}
      \end{gather}
    \end{minipage} &
    \begin{minipage}{\linewidth}
      \vspace{-1em}
      \begin{gather}
        \scalebox{0.9}{\input{tikz/axioms/associatedit.tikz}}
        \tag{Aso}\label{rule:Aso} \\
        \scalebox{0.9}{\input{tikz/axioms/w-w-algebra.tikz}}
        \tag{WW}\label{rule:WW}
      \end{gather}
    \end{minipage} \\
    \hline
    \multicolumn{2}{|c|}{\textbf{ZXW Rules}} \\
    \hline
    \begin{minipage}{\linewidth}
      \vspace{-1em}
      \begin{gather}
        \scalebox{0.9}{\input{tikz/axioms/triangleocopydit.tikz}}
        \tag{Bs0}\label{rule:Bs0} \\
        \scalebox{0.9}{\input{tikz/axioms/trialgebra.tikz}}
        \tag{TA}\label{rule:TA}
      \end{gather}
    \end{minipage} &
    \noindent\colorbox{gray!20}{%
    \begin{minipage}{\linewidth}
      \vspace{-1em}
      \begin{gather}
        \scalebox{0.9}{\input{tikz/axioms/trianglepicopydit2.tikz}}
        \tag{Bs1}\label{rule:Bs1}\\
        \scalebox{0.9}{\input{tikz/axioms/hadamard-decomposition2.tikz}}
        \tag{HD}\label{rule:HD}
      \end{gather}
    \end{minipage}} \\ 
    \hline
    \multicolumn{2}{|c|}{\textbf{Controlled Ring Rules}} \\
    \hline
    \begin{minipage}{\linewidth}
      \vspace{-1em}
      \begin{gather}
        \input{tikz/con/c_state0.tikz}
        \tag{CS0}\label{rule:cstate0} \\
        \scalebox{0.85}{\input{tikz/con/cs_copy.tikz}}
        \tag{CScpy}\label{rule:CScpy}
      \end{gather}
    \end{minipage} &
    \begin{minipage}{\linewidth}
      \vspace{-1em}
      \begin{gather}
        \scalebox{0.75}{\input{tikz/con/c_sq0.tikz}}
        \tag{CM0}\label{rule:c_sq0} \\
        \scalebox{0.75}{\input{tikz/con/csq_copy.tikz}}
        \tag{CMcpy}\label{rule:CMcpy}\\
        \scalebox{0.75}{\input{tikz/con/csq_add_comm_statement.tikz}}
        \tag{CMcom}\label{rule:CMcom}
      \end{gather}
    \end{minipage} \\
    \hline
  \end{tabular}
  \caption{These ZX, ZW, and ZXW Rules are altogether complete for qubit linear maps~\cite{poor2023completeness}, where $k \in \{0, 1\}$ and $a \in \mathbb{C}$.
  The white background ZX, ZW, and ZXW Rules here suffice for completeness of \emph{arithmetic diagrams} (Definition~\ref{def:arithmetic}), where \eqref{rule:TA} was used only to prove Lemma~\ref{lem:kill_quad} and we did not use the X spider case of \eqref{rule:S2}. Equivalently, these complete rules for arithmetic diagrams can all be derived from the minimal qubit ZW-calculus of~\cite{deVisme2025minzw}. Adding to these the \ref{rule:CScpy}, \ref{rule:CMcpy}, and \ref{rule:CMcom} rules, we can show that controlled states and controlled operators form rings, culminating in Theorem~\ref{thm:ctrl_pnf} achieving completeness and minimality for all operations over these rings. We did not use the gray background rules in this work.}
  \label{fig:zxw_rules}
\end{figure}

\subsection{Controlled Diagrams}

Following \cite{shaikh2022sum}, we cover the definitions of controlled states and controlled square matrices, and arithmetic on them.
Note that this is a different definition of controlled states to Ref.~\cite{jeandel2024adddiffzx} in which controlling on $\ket{0}$ is $\ket{+}^{\otimes n}$ instead of $\ket{0}^{\otimes n}$; this choice appears to make a substantial difference in the algebraic properties, which we discuss in Remark~\ref{remark:entrywise}.

\begin{definition}\label{def:ctrlsqmat}
  For an arbitrary $n \times n$ matrix $M$, we define the controlled matrix of $M$ as the diagram $\tilde{M}$ with the following interpretation:
\begin{equation}
  \left\llbracket \quad \input{tikz/con/c_sq_def.tikz} \;\;\; \right\rrbracket ~=~ \bra{0} \otimes I + \bra{1} \otimes M ~=~ \begin{bmatrix} I & M \end{bmatrix} 
\end{equation}
\end{definition}

We represent the additional dimension of $\tilde{M}$ as a vertical wire to distinguish it as the control wire. Controlled matrices satisfy the two equations:

\begin{equation*}
    \input{tikz/con/c_sq0.tikz}
    \tag{\ref{rule:c_sq0}}
\end{equation*}
\begin{equation}
    \input{tikz/con/c_sq1.tikz}
\end{equation}

We define controlled states similarly.
\begin{definition}
  For an arbitrary $n$-qubit state $\psi$, the controlled state of $\psi$ is the diagram with the following interpretation:
  
  \[
 \left\llbracket \;\;\; \input{tikz/con/c_state_def.tikz} \;\;\; \right\rrbracket ~=~ \left[ \begin{array}{cc|cc} \cline{2-3} \; 1 \; & & & \\ \; 0 \; & & & \\ \; \vdots \; & \multicolumn{2}{c}{\psi} \\ & \multicolumn{2}{c}{} \\ \; 0 \; & & \\ \cline{2-3} \end{array} \right]
 \]

\end{definition}

Controlled states satisfy the equations:
  \begin{equation*}
    \input{tikz/con/c_state0.tikz}
    \tag{\ref{rule:cstate0}}
  \end{equation*}
  \begin{equation}
    \input{tikz/con/c_state1.tikz}
    \label{eq:cstate1}
  \end{equation}

\begin{prop}[Propositions 3.3 and 3.4 of \cite{shaikh2022sum}]\label{prop:cmat_ops}
    Given controlled matrices $\tilde{M_1}, ..., \tilde{M_k}$ and $c_1, ..., c_k \in \mathbb{C}$, the controlled square matrices $\widetilde{\Pi_i M_i}$ and $\widetilde{\Sigma_i c_i M_i}$ are respectively given by
    \begin{equation}
      \input{tikz/con/csq_prod.tikz}\qquad\qquad\qquad\qquad\input{tikz/con/csq_sum.tikz}
    \end{equation}
\end{prop}

The addition and multiplication of controlled states are defined similarly to controlled matrix arithmetic, except that a layer of $\lowerbox{\coW}$s are appended at the bottom to preserve the number of outputs.
The role of $\lowerbox{\coW}$ is to \textit{copy} controlled diagrams, as we will show in Section~\ref{sec:ring}.

\begin{prop}
  Given controlled states $\tilde{\psi}$ and $\tilde{\phi}$, we define addition $\tilde{\psi} \boxplus \tilde{\phi}$ and multiplication $\tilde{\psi} \boxtimes \tilde{\phi}$ operations on them to result in the controlled states:
  \begin{equation}
    \input{tikz/con/cs_add_def.tikz} \qquad\qquad\qquad\qquad \input{tikz/con/cs_times_def.tikz}
\end{equation}
\end{prop}
\begin{remark}
  Ref.~\cite{shaikh2022sum} defined this addition with red spiders at the bottom instead of $\lowerbox{\coW}$s; the linear map is the same, being $\widetilde{\phi + \psi}$. In choosing to use $\lowerbox{\coW}$, we will soon define the arithmetic fragment of the ZW-calculus.
  
  Removing the $\lowerbox{\coW}$'s from the bottom of this multiplication gives the controlled diagram for the tensor product in Hilbert space $\widetilde{\psi \otimes \phi}$, as noted in Ref.~\cite{jeandel2024adddiffzx}.
\end{remark}
Although the interpretation of $\tilde{\psi} \boxtimes \tilde{\phi}$ is not the controlled multiplication of the linear maps $\psi$ and $\phi$, nor as nice an expression in terms of $\psi$ and $\phi$, we will show in this work that this is in fact multiplication in the ring of \emph{multilinear polynomials}.
\section{Quantum Control as a Higher-Order Map}\label{sec:ctrlmap}
In quantum circuits, quantum control is realised through controlled gates.
Before delving into the main proofs, in this section, we illustrate the correspondence between the controlled diagrams investigated in this work and the conventional concept of controlled gates in quantum circuits.
Specifically, we can derive the latter as a special case of the former.
We show this by reasoning with controlled gates through a straightforward construction on top of our ring of controlled square matrices.
We define the higher-order map $\Control$ which takes a square matrix $M: V \to V$ to its controlled square diagram $V \otimes \mathbb{C}^2 \to V \otimes \mathbb{C}^2$. In the functorial box notation of \cite{mellies2006functorial}, we write
\begin{equation}
    \input{tikz/func/F_def_box.tikz}
\end{equation}
where
\begin{equation}
    \left\llbracket \quad \input{tikz/con/c_sq_str8.tikz} \;\;\; \right\rrbracket ~=~ \begin{bmatrix}I & 0 \\ 0 & M\end{bmatrix}
\end{equation}
Detailed exploration of control as a functor, extending props to controlled props, was independently carried out by Delorme and Perdrix in~\cite{delorme2026ctrl}.

For example, the CNOT gate can be defined from a controlled $X$ gate since:
\begin{equation}
    \input{tikz/con/Fbox_cx.tikz}
\end{equation}

We prove in Appendix~\ref{sec:ctrlmapproofs} that composition of controlled operations in sequence and in parallel is well-behaved.

\begin{prop}\label{prop:ctrl_comp_h}
\begin{equation}
	\input{tikz/func/ctrl_comp_hdef.tikz}
\end{equation}\end{prop}

\begin{prop}\label{prop:ctrl_comp_v}
\begin{equation}
	\input{tikz/func/ctrl_comp_vdef.tikz}
\end{equation}
\end{prop}

Furthermore, successive applications of $\Control$ recovers the standard notion of multiple-control, which computes the AND of the control qubits:
\begin{prop}\label{prop:FF}
    \begin{equation}
        \input{tikz/func/FF_statement.tikz}
    \end{equation}
\end{prop}
\section{Ring Axioms for Controlled Diagrams}\label{sec:ring}
In this section, we reverse-engineer the underlying algebraic properties of controlled state and controlled square matrix diagrams. This builds up to diagrams for the unique normal form for states used for the first proofs of complete axiomatisation for qubit graphical calculi~\cite{hadzihasanovic2017thesis, Hadzihasanovic2018zwzxcomplete}.
All proofs in this section can be found in Appendix~\ref{sec:ringproofs}.

There are only five Controlled Ring Rules necessary to derive the new completeness results in this work, presented in Figure~\ref{fig:zxw_rules}.
All other diagrammatic equalities in this paper were proven using only the rules in Figure~\ref{fig:zxw_rules}.
We proved these five rules through deducing action on basis states, either from our definitions of controlled states and controlled square matrices, or through Lemmas~\ref{lem:csq_copy}, \ref{lem:csqaddcomm}, and \ref{lem:cscopy}.


Let $\tilde{M_n}$ be the set of controlled square matrices on $n$ qubits. The goal of this section is to prove that the addition and multiplication operations introduced above induce a ring on $\tilde{M_n}$. By Proposition \ref{prop:cmat_ops}, the addition and multiplication of controlled matrices is just the controlled addition and multiplication of the underlying matrices so the fact that the ring properties hold is not particularly surprising. What is more interesting is how easily these properties can be proven with a small subset of the ZXW rules. Likewise, we show that the set of controlled $n$-qubit states $\tilde{S_n}$ also forms a ring. The first lemma enables us to copy controlled matrices.

\begin{lemma}\label{lem:csq_copy}
    For any square matrix $M$, 
    \begin{equation*}
    \input{tikz/con/csq_copy_bzw.tikz}
    \tag{\ref{rule:CMcpy}}
    \end{equation*}
\end{lemma}

Now we show that controlled matrix addition and multiplication satisfy the ring axioms. Associativity of $+, \times$ follow immediately from (\ref{rule:Aso}, \ref{rule:S1}), respectively. Commutativity of addition follows from the commutativity of matrix addition and Proposition \ref{prop:cmat_ops}, defining the following rewrite rule for controlled operators acting on mutually exclusive sectors:

\begin{lemma}\label{lem:csqaddcomm}
    Let $M_1, M_2$ be $n \times n$ matrices. 
    \begin{equation*}
       \input{tikz/con/csq_add_comm_statement.tikz}
       \tag{\ref{rule:CMcom}}
    \end{equation*}
\end{lemma}

\begin{lemma}\label{lem:csqaddid}
The additive identity is defined as $\redzeroup \otimes I_n$:
\begin{equation}
    \input{tikz/con/csq_add_id.tikz}
\end{equation}
\end{lemma}

The multiplicative identity is defined very similarly as $\lowerbox{\greenzeroup} \otimes I_n$. The existence of additive inverses relies on the copying lemma from before.

\begin{lemma}\label{lem:csq_inv}
    The additive inverse of $\tilde{M}$ is $\raisebox{-5pt}{\numbergate[-1]} \circ \tilde{M}$.
\end{lemma}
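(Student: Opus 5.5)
We need to show that adding $\tilde{M}$ to the diagram $\raisebox{-5pt}{\numbergate[-1]} \circ \tilde{M}$ with the controlled addition of Proposition~\ref{prop:cmat_ops} yields the additive identity of Lemma~\ref{lem:csqaddid}, namely $\redzeroup \otimes I_n$.

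Concretely, I would start from the sum diagram: a W node whose two outputs feed the control wires of two copies of $\tilde{M}$, placed in sequence on the $n$-qubit target wire, with the $-1$ box sitting on the control wire of the second copy. The first step is to use the copying lemma, Lemma~\ref{lem:csq_copy} (\ref{rule:CMcpy}), read right to left. This merges the two copies of $\tilde{M}$ into a single $\tilde{M}$ whose control wire is fed by a Z spider (copy) placed above the W structure. That is the form the rule gives when a controlled matrix is applied twice under control from a W-split.

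The diagram then has a single $\tilde{M}$ whose control is an arithmetic gadget. The gadget is a W node followed by a Z node, with a $-1$ on one branch; equivalently, it is the controlled-addition gadget for scalars $1$ and $-1$. The second step is to simplify this one-input one-output gadget using the ZW rules, chiefly (\ref{rule:Add}). It should reduce to $\numbergate[0]$, that is, a green box with parameter $1+(-1)=0$. By (\ref{rule:Zer}) and the definition of the phase box, this equals a red effect followed by a red state: the map $\ketz\braz$ up to the correct scalar.

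The final step applies the red $\ketz$ state into the control of $\tilde{M}$ and uses (\ref{rule:c_sq0}) to turn $\tilde{M}$ into the identity on the target. The dangling red effect on the original control input remains, giving exactly $\redzeroup \otimes I_n$.

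The main obstacle I expect is the middle step: matching the diagram produced by \ref{rule:CMcpy} exactly to the left-hand side of (\ref{rule:Add}), with its orientation of W and Z nodes, and tracking any scalar. If the shapes do not align directly, I would first use (\ref{rule:Pcpy}) to push the $-1$ box through the spider, so that the gadget literally becomes the $1$ and $-1$ branches summed by \ref{rule:Add}. Failing that, I would check the gadget semantically to confirm it has the right shape before rewriting.
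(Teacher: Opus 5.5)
Your overall route is the intended one: merge the two controlled copies with \eqref{rule:CMcpy}, collapse the resulting one-wire control gadget to the zero box, and finish with \eqref{rule:c\_sq0}. The paper itself says additive inverses rely on the copying lemma. However, the step everything hinges on uses the wrong node. \eqref{rule:CMcpy} is the rule by which $\lowerbox{\coW}$ copies controlled diagrams. Read right to left, it replaces the two copies of $\tilde M$ by one copy whose control is fed by a $\lowerbox{\coW}$ rejoining the two W branches. This is the transposed W: $|00\rangle\mapsto|0\rangle$, $|01\rangle,|10\rangle\mapsto|1\rangle$, $|11\rangle\mapsto 0$. It is not a Z spider.

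This is not cosmetic. Suppose the branches were rejoined by a Z spider. A W split only ever outputs $|00\rangle$, $|01\rangle$ or $|10\rangle$, so the gadget would send $|1\rangle$ to $0$ whatever boxes sit on the branches. The same ``merge'' applied to $\tilde M+\tilde M$ would then return $\tilde{0}$ rather than $\widetilde{2M}$, so no sound rule produces that shape. Moreover, since the gadget no longer depends on the box labels, \eqref{rule:Add} has nothing to act on: it does not compute $1+(-1)$. Your other phrasing, a Z copy placed above the W, is the product gadget, which multiplies rather than adds.

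With the correct node, the control gadget is W, then the boxes $1$ and $-1$ on the two branches, then $\lowerbox{\coW}$. It fixes $|0\rangle$ and sends $|1\rangle$ to $(1-1)|1\rangle=0$. This is the scalar-addition pattern handled by \eqref{rule:Add} (cf.\ the appendix identity \eqref{eq:x-x=0}). The resulting zero box splits by \eqref{rule:Zer} into a red effect over a red state, and \eqref{rule:c\_sq0} then yields $\redzeroup\otimes I_n$ as you describe.

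It is also worth seeing why the $\lowerbox{\coW}$ merge is legitimate here. On $|11\rangle$ a $\lowerbox{\coW}$ outputs $0$, whereas two copies of $\tilde M$ would give $M^2$. So the merge is sound only because both controls come from a common W split, which never produces $|11\rangle$. That is exactly the situation in the sum of $\tilde M$ with its $(-1)$-boxed copy.
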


\begin{lemma}\label{lem:csq_dist}
    The addition and multiplication operations of controlled matrices distribute:
    \begin{equation}
        \input{tikz/con/csq_dist_statement.tikz}
    \end{equation}
\end{lemma}

Combining the lemmas of this section shows that controlled matrices form a ring. A similar result can be shown for controlled states. Once again, we start with the ability to copy controlled states. 
\begin{lemma}\label{lem:cscopy}
    For any state $\psi$,
    \begin{equation*}
        \input{tikz/con/cs_copy.tikz}
        \tag{\ref{rule:CScpy}}
    \end{equation*}
\end{lemma}

Several of the ring axioms follow directly from basic ZXW rules, such as commutativity of addition:

\begin{lemma}\label{lem:csaddcomm}
    For $n$-partite states $\psi_1, \psi_2$, $\tilde{\psi_1} \boxplus \tilde{\psi_2} = \tilde{\psi_2} \boxplus \tilde{\psi_1}  $.
\end{lemma}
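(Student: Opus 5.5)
The plan is to prove the commutativity $\tilde{\psi_1} \boxplus \tilde{\psi_2} = \tilde{\psi_2} \boxplus \tilde{\psi_1}$ purely diagrammatically, using the symmetry of the W generator. Recall the definition of $\boxplus$. The two control wires of $\tilde{\psi_1}$ and $\tilde{\psi_2}$ are joined by a single W node, whose two outputs feed the controls, exactly as in the controlled-matrix sum of Proposition~\ref{prop:cmat_ops}. The $n$ output pairs are then merged by a layer of $n$ copies of $\lowerbox{\coW}$, the $k$-th one taking the $k$-th output of $\tilde{\psi_1}$ and the $k$-th output of $\tilde{\psi_2}$. So swapping the roles of $\psi_1$ and $\psi_2$ amounts to crossing every wire between the two boxes: the two control wires at the top, and each of the $n$ output pairs at the bottom.

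First, I apply \eqref{rule:Sym} for the W generator with respect to swapping two outputs. This lets me insert a swap on the two control wires emanating from the top W node. Using Only Connectivity Matters, I then slide the two boxes past one another, so that $\tilde{\psi_2}$ sits on the left and $\tilde{\psi_1}$ on the right. This leaves a braid of $n$ crossings between the outputs of the boxes and the bottom $\lowerbox{\coW}$ layer.

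Second, I remove each of these $n$ crossings. By its definition, $\lowerbox{\coW}$ is a W node bent by a cup and caps, and its two inputs correspond to two outputs of the underlying W node. Hence \eqref{rule:Sym}, transported through the cups and caps by Only Connectivity Matters, shows that $\lowerbox{\coW}$ is symmetric in its two inputs. Applying this to each of the $n$ co-W nodes absorbs the crossings and yields exactly the diagram for $\tilde{\psi_2} \boxplus \tilde{\psi_1}$.

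The only potential obstacle is this second step, because \eqref{rule:Asym} warns that the W node is not symmetric between an input and an output. I therefore need to check that the two wires being exchanged on each $\lowerbox{\coW}$ are both genuine outputs of the W node. By the definition, the single input of the underlying W node becomes the output of $\lowerbox{\coW}$, while two of its outputs are bent upward into the two inputs. So the swap is between two outputs and \eqref{rule:Sym} applies legitimately. Apart from this check, the argument needs no rules beyond \eqref{rule:Sym} and the structural rules.
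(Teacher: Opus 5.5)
Your proposal is correct and takes essentially the same approach as the paper, which proves this lemma by a short diagrammatic derivation using only the output-symmetry of the W generator \eqref{rule:Sym}. That symmetry is applied at the top W node and, transported through the cups and caps, at each co-W node, together with Only Connectivity Matters. You also explicitly check that the legs swapped on each co-W node are genuine outputs of the underlying W node, so \eqref{rule:Asym} is no obstruction; the argument depends on exactly this point.
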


Associativity of $\boxplus$ follows similarly, using (\ref{rule:Aso}). Next we have the additive identity.

\begin{lemma}\label{lem:csaddid}
    $\tilde{\psi} \boxplus\mathbf{\tilde{0}} = \tilde{\psi}$.
\end{lemma}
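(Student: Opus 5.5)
The proof will be a short diagrammatic rewrite that reduces the sum with the zero controlled state back to $\tilde{\psi}$, mirroring the controlled-matrix additive identity, Lemma~\ref{lem:csqaddid}.

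The first step fixes the representation of $\mathbf{\tilde{0}}$. Its interpretation is $\ket{0}\otimes\ket{0^n}$ shifted so the control picks the zero state, so by (\ref{rule:cstate0}) together with the zero-state rules (\ref{rule:Zer}, \ref{rule:Ept}) it can be drawn as a disconnected red \redzeroup{} on the control wire, in parallel with $n$ red \redzero{} effects on the outputs. Concretely, I will take the controlled zero state to be the diagram whose control wire is plugged into a red dot and whose $n$ output wires each come from a red zero state. I expect this to be a one-line identity, since a controlled state with $\psi = 0$ has only its $\ket{0}$-component surviving.

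Next I substitute this into the definition of $\boxplus$. The control wire of the sum feeds a W node whose two branches go to the controls of $\tilde{\psi}$ and $\mathbf{\tilde{0}}$. On each of the $n$ outputs, the corresponding output of $\tilde{\psi}$ and a red zero state meet at a $\coW$.

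I then simplify each piece in turn.
\begin{enumerate}
\item A red zero state plugged into one input of a $\coW$ is just the identity on the other input, because $\coW$ with a $\ket{0}$ on one leg is the identity. This follows from (\ref{rule:Bs0}), or from the W-unit property derivable via (\ref{rule:Aso}) and (\ref{rule:WW}), possibly after using (\ref{rule:Asym}) to orient the $\coW$. Applying it removes all $n$ bottom $\coW$s, so the outputs of $\tilde{\psi}$ become the outputs of the sum.
\item The branch of the control W going to $\mathbf{\tilde{0}}$ now ends in a red effect (the dual zero), and the W node with a zero effect on an output leg reduces to a wire by the same unit law.
\item What remains is exactly $\tilde{\psi}$.
\end{enumerate}

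The main obstacle is bookkeeping in the second step: correctly identifying which leg of each W node is its input, so that the unit law applies in the right orientation. If the needed unit law is not directly an axiom, I would derive it from (\ref{rule:Bs0}) and (\ref{rule:K0}) by checking the action on the basis states $\ket{0}$ and $\ket{1}$, as in the appendix lemmas. Alternatively, I would use a scalar-free version of Lemma~\ref{lem:csqaddid}'s argument.
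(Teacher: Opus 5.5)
Your proposal is correct and follows the paper's proof. The paper also represents $\mathbf{\tilde{0}}$ as a red effect on the control wire in parallel with red zero states on the $n$ outputs (asserting this directly from the semantics), and then removes the zero branch of the control W and each bottom co-W by the W-unit law. One small correction: that unit law is the appendix identity \eqref{eq:wid}, not (\ref{rule:Bs0}), which the paper uses for the different configuration of a W copying a red zero plugged into its input.
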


The additive inverse is defined similarly to the case of controlled matrices.

\begin{lemma}\label{lem:csaddinv}
    For a controlled state $\tilde{\psi}$, its additive inverse is $\tilde{\psi} \circ \raisebox{-7pt}{\numbergate[-1]}$.
\end{lemma}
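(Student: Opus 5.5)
We have to show that $\tilde{\psi} \boxplus (\tilde{\psi} \circ \numbergate[-1]) = \mathbf{\tilde{0}}$, where $\mathbf{\tilde{0}}$ is the controlled zero state from Lemma~\ref{lem:csaddid}. The plan mirrors Lemma~\ref{lem:csq_inv}: copy the single controlled state $\tilde\psi$ across both summands, then cancel the two control branches through the W node.

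First, unfold the definition of $\boxplus$. The control wire enters a W node whose two outputs feed the control wires of $\tilde{\psi}$ and $\tilde{\psi}\circ\numbergate[-1]$. The corresponding output wires of the two copies are then merged pairwise by a layer of $\lowerbox{\coW}$s. Read backwards, \eqref{rule:CScpy} says that two copies of $\tilde\psi$ whose outputs are joined by $\lowerbox{\coW}$s, with their controls arriving from separate wires, equal a single $\tilde\psi$ whose control comes from a W node joining those wires. Applying it removes the duplicate $\tilde\psi$. The $-1$ box sits on one branch of the control W, between the W node and $\tilde\psi$, so it stays on that branch throughout. The diagram becomes: control input, then a W node with one plain output and one output through $\numbergate[-1]$, then a co-W (or W) merging these, then $\tilde\psi$.

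Second, simplify this W-sandwich. The relevant rule is \eqref{rule:Add} (together with \eqref{rule:Aso} and \eqref{rule:WW} to put the W pair in the right shape). It says that two branches carrying green boxes $a$ and $b$ between a W split and a W merge fuse into a single box $a+b$. Here $a=1$ (a plain wire, by the notation identifying the phase-0 spider with box $1$) and $b=-1$, so the result is a $\numbergate[0]$ box on the control wire.

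Third, show that $\numbergate[0]$ feeding into $\tilde\psi$ gives $\mathbf{\tilde 0}$. Using \eqref{rule:Zer}, the zero box decomposes into a red zero effect disconnected from a red zero state, possibly with a scalar; I would check whether a scalar appears. Plugging the red zero state into the control of $\tilde\psi$ yields $\ketz^{\otimes n}$ by \eqref{rule:cstate0}. The disconnected red effect on the input is exactly the diagrammatic form of $\mathbf{\tilde 0}$, i.e.\ $\redzeroup \otimes \ketz^{\otimes n}$ in the convention of Lemma~\ref{lem:csaddid}.

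The main obstacle is the first step. It requires care with the W orientation and the \eqref{rule:Asym} subtleties, so that the $-1$ box really ends up between the W split and a W merge, which is the form \eqref{rule:Add} needs. If \eqref{rule:CScpy} leaves a co-W in the wrong orientation, I would first use \eqref{rule:WW} and \eqref{rule:Aso} to convert it.
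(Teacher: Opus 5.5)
Your plan is essentially the paper's proof: merge the two copies of $\tilde\psi$ with \eqref{rule:CScpy}, collapse the W split/merge carrying $1$ and $-1$ on the control to the zero box, and finish with \eqref{rule:Zer} and \eqref{rule:cstate0}; the paper also notes that $\tilde\psi$ precomposed with the $-1$ box is again a controlled state, since that box fixes $\ket{0}$, and you should state this too. One caution: \eqref{rule:CScpy} is false if read, as you paraphrase it, as an identity on two independent control wires --- on input $\ket{11}$ the merged side gives $0$, while the two-copy side gives the coW-merge of $\psi\otimes\psi$ (e.g.\ $\ket{0\cdots0}$ for $\psi=\ket{0\cdots0}$) --- and the paper proves it by plugging only $\ket{0}$ and $\ket{1}$ into a single control, so the W split that keeps the two controls from both being $\ket{1}$ is part of the rule; apply it with that W split included, absorbing the $-1$ box into a Z spider on its branch.
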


Associativity and commutativity of $\boxtimes$ follow as before, using (\ref{rule:S1}) for $\lowerbox{\zspid}$. Finally, we must prove distributivity.

\begin{lemma}\label{lem:csdist}
    $\tilde{\psi_1} \boxtimes (\tilde{\psi_2} \boxplus \tilde{\psi_3}) = (\tilde{\psi_1} \boxtimes \tilde{\psi_2}) \boxplus (\tilde{\psi_1} \boxtimes \tilde{\psi_3})$.
\end{lemma}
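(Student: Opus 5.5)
Distributivity will be proven diagrammatically, by copying $\tilde{\psi_1}$ with the \eqref{rule:CScpy} rule and then rearranging the layers of $\lowerbox{\coW}$s with the ZW rules.

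First, unfold the left-hand side using the definitions of $\boxplus$ and $\boxtimes$. The control wire of $\tilde{\psi_1} \boxtimes (\tilde{\psi_2} \boxplus \tilde{\psi_3})$ enters a Z spider with two legs. One leg goes to $\tilde{\psi_1}$. The other goes to a W node whose two outputs are the controls of $\tilde{\psi_2}$ and $\tilde{\psi_3}$. On each of the $n$ output positions there are two stacked $\lowerbox{\coW}$ layers: the inner one merges the outputs of $\tilde{\psi_2}$ and $\tilde{\psi_3}$, and the outer one merges that result with the corresponding output of $\tilde{\psi_1}$.

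The key step is to push the Z spider through the W node using the bialgebra rule \eqref{rule:BZW}. This turns the configuration ``Z then W'' into ``W then two Z spiders''. The control now first hits a W node. Each W output feeds a Z spider, and each Z spider has one leg going to a copy of the wire into $\tilde{\psi_1}$ and one leg going to $\tilde{\psi_2}$ or $\tilde{\psi_3}$. The two wires that now feed $\tilde{\psi_1}$ are joined by a W-type merge. This is exactly the shape of the left-hand side of \eqref{rule:CScpy}, read in the direction ``controlled state under a co-W on its control becomes two copies of the controlled state joined by $\lowerbox{\coW}$ on the outputs''. Applying \eqref{rule:CScpy} therefore yields two copies of $\tilde{\psi_1}$. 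Their outputs are combined output-wise by a $\lowerbox{\coW}$ layer, and each copy is attached to its own Z spider together with $\tilde{\psi_2}$ or $\tilde{\psi_3}$ respectively.

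Next, regroup the $\lowerbox{\coW}$ layers on the outputs. On each output position we now have four incoming wires: the two copies of $\psi_1$, $\psi_2$ and $\psi_3$. These wires are merged by a tree of $\lowerbox{\coW}$s. Using associativity \eqref{rule:Aso} and symmetry \eqref{rule:Sym}, together with Only Connectivity Matters, we can re-bracket this tree as (copy one of $\psi_1$ with $\psi_2$) merged with (copy two of $\psi_1$ with $\psi_3$). The result is exactly the unfolded right-hand side $(\tilde{\psi_1} \boxtimes \tilde{\psi_2}) \boxplus (\tilde{\psi_1} \boxtimes \tilde{\psi_3})$.

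The main obstacle is the orientation bookkeeping in the copy step. After \eqref{rule:BZW}, the wires feeding $\tilde{\psi_1}$ must genuinely form a co-W on its control, with the correct input/output assignment. This matters because W is not symmetric between inputs and outputs \eqref{rule:Asym}, and because the output merges must be $\lowerbox{\coW}$ rather than W. If \eqref{rule:BZW} is stated only for one orientation, I would first bend wires via the definition of $\lowerbox{\coW}$ through cups and caps, and check that the arity-$2$ instance applies. If needed, I would then induct on the number of outputs using \eqref{rule:Aso} for the co-W trees.
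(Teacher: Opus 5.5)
Your proposal is correct and is essentially the paper's proof. You unfold the definitions and apply \eqref{rule:BZW} to turn the Z-then-W pattern into a W followed by two Z spiders, with a $\lowerbox{\coW}$ feeding the control of $\tilde{\psi_1}$. You then duplicate $\tilde{\psi_1}$ with \eqref{rule:CScpy} and regroup the output $\lowerbox{\coW}$s by associativity and symmetry to read off $(\tilde{\psi_1} \boxtimes \tilde{\psi_2}) \boxplus (\tilde{\psi_1} \boxtimes \tilde{\psi_3})$.

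One caution about how you paraphrase \eqref{rule:CScpy}. The copy step relies on the surrounding W and Z spiders produced by \eqref{rule:BZW}, which make the two inputs of that $\lowerbox{\coW}$ mutually exclusive. A context-free rule ``$\lowerbox{\coW}$ into the control equals two copies'' generally fails on input $\ket{11}$. In your diagram that context is present, so the step goes through.
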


\begin{remark}\label{remark:entrywise}
    A different addition and multiplication for controlled states was defined in Ref.~\cite{jeandel2018zxconstructive}. There corresponded to entry-wise addition and multiplication of statevectors, while our $\boxplus$ and $\boxtimes$ correspond to addition and multiplication of polynomials in bijective correspondence to controlled states, which we show next.
\end{remark}

\section{Isomorphism between the Ring of Controlled States and Multilinear Polynomials}\label{sec:iso}

Its been known since 2011 that $\lowerbox{\wspid}, \lowerbox{\zspid}$ can be used to add and multiply number states $\lowerbox{\numberstate}$, respectively \cite{coecke2011ghz}. In the previous section we saw that $\lowerbox{\wspid}, \lowerbox{\zspid}$ can moreover be used to copy controlled diagrams. In this section, we explain this connection by demonstrating that controlled states are in fact isomorphic to multilinear polynomials. This being a bijection is a well-known folklore result in the study of entangled states, but to the best of our inquiries we are not aware of a proof. More generally, Ref.~\cite{wilson2023diffpolycirc} presented Cartesian Distributive Categories exemplified by polynomial circuits, which are isomorphic to polynomials over arbitrary commutative semirings or rings; their proof is non-constructive, giving explicit proof only for the case of Boolean circuits~\cite{wilson2021revderbool}. Our proof hinges on a normal form inspired by the recent proof of completeness for the ZXW calculus \cite{poor2023completeness}, suggesting that much of the expressive power of the ZXW calculus comes from this algebraic structure.

Firstly, we describe how to interpret certain ZXW diagrams as polynomials. Consider the diagrams:
\begin{equation}
    \input{tikz/poly/eg1.tikz}
\end{equation}

If we treat the bottom wires as an indeterminate $x$, we can read these bottom-up as computing $x - 1$ and $2x + 3$, respectively. Moreover, since these diagrams are both controlled states, they can be added together,  yield a diagram resembling $3x + 2$:
\begin{equation}\label{eq:eg_add}
    \input{tikz/poly/eg4.tikz}
\end{equation}

When trying to multiply these diagrams, rather than getting $(x-1)(2x+3) = 2x^2 + x - 3$, we instead get $x - 3$.
\begin{equation}\label{eq:eg_times}
    \input{tikz/poly/eg5.tikz}
\end{equation}

The reason for the missing $2x^2$ term is due to the following:
\begin{lemma}\label{lem:kill_quad}
    \begin{equation}\label{eq:kill_quad}
        \input{tikz/lemmas/kill_quad_statement.tikz}
    \end{equation}
\end{lemma}
This lemma, proven in Appendix~\ref{sec:applem}, implies that $x^2 = 0$ for all variables $x$. Other than this, controlled state arithmetic appears to faithfully reflect polynomial arithmetic. To formalise this correspondence, we introduce the following definition.

\begin{definition}\label{def:arithmetic}
    A ZXW diagram with a single input on top is \textbf{arithmetic} if it contains only  $\lowerbox{\idwire}$, $\lowerbox{\swap}$ wires, $\lowerbox[10]{\wspids}$, $\lowerbox[10]{\zspids}$, $\lowerbox{\coWs}$ nodes and $\lowerbox{\numberstate}$ boxes.
\end{definition}
\begin{remark}
    This fragment of the ZXW-calculus defines a subcategory; adding $\lowerbox{\redzero}$, this is an instance of a Cartesian Distributive Category as defined in Ref.~\cite{wilson2023diffpolycirc}.
\end{remark}

To interpret an arithmetic ZXW diagram as an arithmetic expression, read $\lowerbox{\wspid}$ as $+$, $\lowerbox{\zspid}$ as $\times$, $\lowerbox{\numberstate}$ as the number $a$, $\lowerbox{\coW}$ as fanout and output/bottom wires as variables $x_1, ..., x_n$ numbered from left to right. The following lemma establishes that all arithmetic diagrams are controlled states:
\begin{lemma}
    For any arithmetic diagram $A$, \begin{equation}\label{eq:arith_cs}
        \input{tikz/poly/arith0_statement.tikz}
    \end{equation}
\end{lemma}
\begin{proof}
    By definition, other than wires $A$ contains only $\raisebox{-10pt}{\wspids}$, $\raisebox{-10pt}{\zspids}$, $\raisebox{-5pt}{\coWs}$, and $\raisebox{-3pt}{\numberstate}$. All $\raisebox{-3pt}{\numberstate}$'s can be removed with (\ref{rule:Ept}). Meanwhile all the spiders copy $\raisebox{-5pt}{\redzero}$ due to (\ref{rule:Bs0}, \ref{rule:K0}, \ref{eq:wid}) respectively.
\end{proof}

Just as it is typical to represent a polynomial in normal form as a sum of products, it is possible to rewrite every arithmetic diagram into a normal form as a single $\lowerbox[10]{\wspids}$, followed by a layer of $\lowerbox[10]{\zspids}$, followed by a layer of $\lowerbox[2]{\numberstate}, \lowerbox[2]{\coWs}$. 

\begin{definition}
    An $n$-output arithmetic diagram is said to be written in \textbf{polynormal form} (PNF) if it is of the form:
    \begin{equation}\label{eq:pnf}
        \input{tikz/poly/pnf.tikz}
    \end{equation}

    The $i$th coefficient $a_i$ is connected to the $k$th $\lowerbox{\coWs}$ iff the $k$th bit in the binary expansion of $i$ is 1. 
\end{definition}

This normal form is familiar from completeness of all linear maps for qubits~\cite{Hadzihasanovic2018zwzxcomplete} and for qudits~\cite{poor2023completeness}. The reason we introduce the definition of a PNF is that it is an arithmetic diagram and therefore has a more immediate arithmetic interpretation. The reason for the specific connectivity condition is that it enables a diagram in PNF to directly represent its own matrix.

\begin{prop}\label{prop:vec_pnf}
The diagram in Equation~\eqref{eq:pnf} equals the matrix
    $\begin{bmatrix}
            1 &  a_0 \\ 0 & a_1 \\ \vdots & \vdots \\ 0 & a_{2^n-1}
        \end{bmatrix}$
\end{prop}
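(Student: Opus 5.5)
Since the diagram in \eqref{eq:pnf} has a single input wire on top, I will compute its matrix column by column. I plug $\ket{0}$ and $\ket{1}$ into the input and read off the resulting $n$-qubit states. This fixes what the two columns must be before any detailed calculation.

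\textbf{First column.} Plugging $\ket{0}$ into the top is the same as attaching $\lowerbox{\redzeroup}$ upside down, i.e.\ precomposing with the controlled-state $0$-input. The diagram is arithmetic, so by Lemma~\eqref{eq:arith_cs}, or directly by (\ref{rule:Bs0}, \ref{rule:K0}, \ref{rule:Ept}), this zero state is copied through the top $\lowerbox[10]{\wspids}$ and through every $\lowerbox[10]{\zspids}$. Each number box is deleted, and every $\lowerbox{\coWs}$ receives only zeros and outputs $\ket{0}$. The result is $\ket{0^n}$, which is the column $(1,0,\dots,0)^T$.

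\textbf{Second column.} I will compute semantically. Feeding $\ket{1}$ into the top W node (arity $2^n$) gives $\sum_i \ket{0\cdots 1_i \cdots 0}$, so exactly one branch $i$ carries a $1$. The green spider on branch $i$ then does the following:
\begin{itemize}
\item it receives $\ket{1}$ and multiplies by the scalar $a_i$ from its attached number box, since $\numberstate$ is $\ket{0}+a\ket{1}$ and the spider projects onto $\ket{1}$;
\item it sends $\ket{1}$ along each of its remaining legs to the $\lowerbox{\coWs}$ nodes $k$ for which bit $k$ of $i$ is $1$.
\end{itemize}
All other spiders receive $\ket{0}$ and emit $\ket{0}$ on every leg, with the number boxes contributing a factor $1$.

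Each $\lowerbox{\coWs}$ is a W node read downward: it maps $\ket{1}$ on exactly one input to $\ket{1}$, all-zero inputs to $\ket{0}$, and any input with two or more $1$s to zero. Since only branch $i$ is active, each co-W receives at most one $1$, so output $k$ equals bit $k$ of $i$. Branch $i$ therefore contributes $a_i\ket{i}$, and summing over $i$ gives the second column $\sum_i a_i\ket{i}$, as claimed.

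\textbf{Main obstacle.} The delicate step is the bookkeeping of the co-W semantics and the bit convention. I must check that the left-to-right numbering of outputs matches the big-endian reading of $i$ used in the matrix indexing. I must also handle the $i=0$ term: its spider has no legs to co-W nodes, so it contributes $a_0\ket{0^n}$ with all co-W nodes receiving only zeros.

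Alternatively, the whole result can be proved diagrammatically by induction on $n$. Splitting the W node via (\ref{rule:Aso}) separates the coefficients into those with and without bit $n$ set, and the inductive hypothesis then combines them with \eqref{eq:cstate1}. I would present the semantic computation, since the proposition is a statement about the interpretation.
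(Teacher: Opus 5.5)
Your proof is correct, but it takes a different route from the paper.

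The paper argues by induction on $n$. It uses (\ref{rule:BZW}) to split an $(n+1)$-output PNF into two $n$-output PNFs, $D_{even}$ and $D_{odd}$: the odd-indexed coefficients are exactly those attached to the last output. It applies the inductive hypothesis to each. It then plugs red basis effects into the last output and sums the two pieces, $D_{even}\otimes\ket{0} + (D_{odd}\ket{1}\bra{1})\otimes\ket{1}$, which interleaves the even and odd coefficient lists.

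You instead trace basis states through the whole diagram at once. The top W node turns input $\ket{1}$ into a one-hot superposition over branches, so each co-W node sees at most one $1$. Its ``two or more $1$s give $0$'' behaviour therefore never fires. That is exactly why the PNF connectivity makes the diagram spell out its own matrix.

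What each buys: your computation is shorter and makes this mechanism transparent. The paper's even/odd split is the diagrammatic form of the recursion $p = p_0 + p_1 x_{n+1}$, which it reuses in the inductive step of Theorem~\ref{thm:iso}. If you pursue your alternative inductive sketch, note that the paper's split needs (\ref{rule:BZW}), not just (\ref{rule:Aso}). All the odd-indexed spiders feed the same last co-W node, and that node has to be factored out.

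On the obstacle you flagged: the paper's own proof fixes the convention as you need it. The rightmost output $x_{n+1}$ is attached to the odd coefficients, so the leftmost output is the most significant bit, and the big-endian reading matches.

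A minor point: under the paper's definition of the X spider (a green spider conjugated by Hadamards), the red zero state is $\sqrt{2}\ket{0}$. So ``plugging $\ket{0}$ is the same as attaching the red zero'' holds only up to a scalar. This does not affect your semantic computation.
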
 

\begin{proof}
    See Appendix~\ref{sec:isoproofs}.
\end{proof}

Thus, every controlled state can be represented as at least one arithmetic diagram (namely, its PNF). Moreover, we now show that any other arithmetic diagram can always be rewritten to its PNF.

\begin{thm}\label{thm:uni_pnf}
    All arithmetic diagrams can be rewritten into PNF through application of ZXW-calculus rules. Therefore, the rules applied suffice for completeness for the arithmetic fragment of the ZXW-calculus.
\end{thm}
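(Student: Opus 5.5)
We prove the statement by structural induction on the arithmetic diagram, treating it as a composite of generators read bottom-up and showing that PNF is closed under each generator. Base cases: a bare wire is the PNF of the polynomial $x$ (coefficients $a_0=0$, $a_1=1$), obtained via the identity rule for the W node (\ref{eq:wid}) together with (\ref{rule:Ept}). A number state $\numberstate$ is the PNF with zero outputs and the single coefficient $a$. A fanout $\coWs$ applied to a variable is read as copying that variable, and it is already the connectivity used in PNF.

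The inductive step has three cases. First, for $\wspids$ applied to two diagrams in PNF, we use (\ref{rule:Aso}) to merge the top W nodes into one. We then use (\ref{rule:CScpy}), equivalently the $\coWs$-copying behaviour, and (\ref{rule:Sym}) to align the variable wires. Finally we use (\ref{rule:Add}) to combine pairs of monomial terms with the same connectivity pattern, which sums their coefficients. Second, for $\zspids$ applied to two diagrams in PNF, we use distributivity (Lemma~\ref{lem:csdist}, i.e.\ (\ref{rule:BZW}) with $\coWs$ copying) to push the Z spider through both top W nodes. This yields a sum of pairwise products of monomials. Each product of monomials is fused with (\ref{rule:S1}), and coefficients multiply via (\ref{rule:Pcpy}) or the phase-multiplication consequence of (\ref{rule:S1}). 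When the two monomials share a variable, that variable's wire gets two connections into the same $\coWs$ through a Z spider, and Lemma~\ref{lem:kill_quad} sends the whole term to $0$. That zero term is then deleted by (\ref{rule:Zer}) and the additive identity (Lemma~\ref{lem:csaddid}). Third, for swaps and fanouts joining variables, the variables are renamed or merged. We handle this with (\ref{rule:WW}) and (\ref{rule:Aso}) on adjacent $\coWs$ nodes, again combining like terms with (\ref{rule:Add}).

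After these steps we have a sum over monomials in which each subset of variables appears at most once. Missing monomials are inserted with coefficient $0$ using the additive identity, so that all $2^n$ coefficients are present. The terms are then reordered by (\ref{rule:Sym}) into the binary-indexed ordering required by PNF.

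Completeness follows from this. By Proposition~\ref{prop:vec_pnf}, the PNF's coefficients are exactly the entries of its interpreting matrix. So two arithmetic diagrams with equal semantics have the same PNF, and each can be rewritten to that common PNF.

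The main obstacle is the $\zspids$ case. There we must show diagrammatically that a Z spider distributes over the W-sum of PNF terms and correctly handles shared $\coWs$ wires. Managing the bialgebra rule (\ref{rule:BZW}) as it interacts with fanouts, and cleanly killing the quadratic terms with Lemma~\ref{lem:kill_quad}, will take the most careful bookkeeping.
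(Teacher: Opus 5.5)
Your induction has a genuine gap: it only ever applies a \lowerbox{\coWs} to a bare variable. It therefore presupposes that every arithmetic diagram is a tree of W and Z nodes over PNF pieces, with all fanouts sitting directly on the output wires. Definition~\ref{def:arithmetic} allows more than that. A \lowerbox{\coWs} may feed a W, a Z spider, a number box or another \lowerbox{\coWs}, so subexpressions can be shared. Below the top node, such a diagram is one connected piece with several inputs, not independent single-input diagrams to which your hypothesis applies, so these diagrams are never reached.

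This is not a bookkeeping omission. The natural repair is to unfold the sharing by copying the subdiagram through the \lowerbox{\coWs} and then run your tree induction, and that is unsound, because \lowerbox{\coWs} sends $\ket{11}$ to $0$. Copying a controlled state through it is valid when its inputs lie in mutually exclusive sectors (e.g.\ both below a single W), but not when the copies end up multiplied together. Two examples:
\begin{itemize}
\item A Z spider whose two outputs enter one \lowerbox{\coWs}, which feeds a number box $a$, denotes $\bra{0}$, i.e.\ the constant $0$, not $a^2$.
\item With a W into $x_1,x_2$ in place of the box, the diagram denotes the zero polynomial, whereas unfolding gives $(x_1+x_2)^2=2x_1x_2$.
\end{itemize}
So any copying you do must carry this context. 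That includes your gloss of \eqref{rule:CScpy} as a general ``copying behaviour''. In your W case no copying is needed anyway: fusing \lowerbox{\coWs}'s by \eqref{rule:Aso} suffices.

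The missing idea is an ordering that kills squares before anything is copied. This is why the paper does not induct on structure but rewrites top-down. It forces a single W on top and uses steps (2)--(3) to keep a W over a layer of Z spiders there. It fuses maximally and removes every \raisebox{-8pt}{\xsq} via Lemma~\ref{lem:kill_quad} and \eqref{rule:K0}, so no \lowerbox{\coWs} can have two inputs from the same product term. Only then do steps (4)--(6) clear what lies beneath the \lowerbox{\coWs}'s, with termination by a depth measure.

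Your approach could be rescued along the same lines. Abstract each shared subdiagram as a fresh variable $y$, normalise (which discards the $y^2$ terms), and only then substitute its PNF through the now sector-exclusive \lowerbox{\coWs}.

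The rest of your plan agrees with the paper: the Z case via distributivity and Lemma~\ref{lem:kill_quad}, collecting like terms, padding with zero coefficients, reordering by \eqref{rule:Sym}, and deducing completeness from Proposition~\ref{prop:vec_pnf}. Collecting like terms needs a derived lemma such as \eqref{eq:cpk_add}, not bare \eqref{rule:Add}. Note also that by invoking \eqref{rule:CScpy}, Lemma~\ref{lem:csdist} and Lemma~\ref{lem:csaddid} you rely on the Controlled Ring Rules. The paper's claim is that the plain white-background ZXW rules suffice for this theorem.
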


\begin{proof}
    We present an algorithm to rewrite any arithmetic diagram to PNF in Appendix~\ref{sec:isoproofs}.
\end{proof}
\begin{remark}  
    All of the qubit ZXW-calculus rules~\cite{poor2023completeness} applied in this proof are derivable from the minimal qubit ZW-calculus rules from~\cite{deVisme2025minzw}, derived independently from this work~\cite{Agnew2023Masters}. We can replace the \ref{rule:TA} rule with the rule of Lemma~\ref{lem:kill_quad}, and the qubit ZXW-calculus \ref{rule:WW} rule is derived in their~\cite[Lemma 23]{deVisme2025minzw}.
\end{remark}

\subsection{Isomorphism}
At last we can prove the isomorphism. Recall that $\tilde{S_n}$ is the ring of controlled states with $n$ outputs. Throughout, we shall let $\polyring$ denote the multilinear ring $\mathbb{C}[x_1, ..., x_{n}]/(x_1^2, ..., x_{n}^2)$.

\begin{thm}\label{thm:iso}
    There is an isomorphism $\polyring \simeq \tilde{S_n}$
\end{thm}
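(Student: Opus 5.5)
We will construct an explicit ring homomorphism $\Phi:\polyring\to\csring$ and show that it is bijective. Every element of $\polyring$ has a unique representation $p=\sum_{S\subseteq\{1,\dots,n\}} a_S \prod_{k\in S} x_k$, because the quotient by $(x_1^2,\dots,x_n^2)$ leaves exactly the squarefree monomials as a basis. Index subsets $S$ by integers $i\in\{0,\dots,2^n-1\}$ via binary expansion, so that bit $k$ of $i$ is $1$ iff $k\in S$. We then define $\Phi(p)$ to be the PNF diagram \eqref{eq:pnf} with coefficients $a_i=a_S$.

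\textbf{Bijectivity.} By Proposition~\ref{prop:vec_pnf}, $\llbracket\Phi(p)\rrbracket$ is the controlled state whose $\psi$-column is $(a_0,\dots,a_{2^n-1})^T$. The map $p\mapsto(a_i)_i$ is a linear bijection $\polyring\to\mathbb{C}^{2^n}$. Likewise, $\psi\mapsto\tilde\psi$ is a bijection from $n$-qubit states onto $\csring$, since the first column of a controlled state is fixed. Hence $\Phi$ is a bijection. Additivity on coefficients is immediate at the level of the linear map, and it also follows diagrammatically.

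\textbf{Homomorphism.} The substantive part is $\Phi(p+q)=\Phi(p)\boxplus\Phi(q)$, $\Phi(pq)=\Phi(p)\boxtimes\Phi(q)$, and $\Phi(1)$ being the multiplicative identity. It suffices to check these on generators and then use distributivity (Lemma~\ref{lem:csdist}) and the other ring axioms from Section~\ref{sec:ring}.

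For addition, we stack the two PNFs under a $\wspid$, merge the $W$ nodes by (\ref{rule:Aso}), and then pair the monomials with equal support. Each pair shares the same $\coWs$ connections, so we combine them with (\ref{rule:Add}) to get coefficient $a_i+b_i$. Since Theorem~\ref{thm:uni_pnf} already gives a rewrite to PNF, and the PNF matches the polynomial reading, this step is essentially bookkeeping.

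For multiplication, bilinearity and distributivity reduce the claim to monomials: $\Phi(a\,x^S)\boxtimes\Phi(b\,x^T)=\Phi(ab\,x^Sx^T)$. When $S\cap T=\emptyset$, we fuse the Z spiders with (\ref{rule:S1}), multiply the number boxes with the $\lowerbox{\zspid}$ multiplication rule, and merge the fanouts with (\ref{rule:Aso}) for $\coWs$. When $S\cap T\neq\emptyset$, some output feeds the same $\zspid$ twice through a $\coW$. Lemma~\ref{lem:kill_quad} then kills that term to $\mathbf{\tilde 0}$, which matches $x_k^2=0$. Finally, $\Phi(1)$ is the constant PNF, i.e. the multiplicative identity of $\csring$ up to (\ref{rule:Ept}).

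\textbf{Main obstacle.} We expect the multiplication step to be the main obstacle. It needs the full correspondence between the arithmetic reading of diagrams and polynomial evaluation, including the vanishing of repeated variables. The first thing to try is to avoid diagrammatic manipulation altogether and compute the semantics directly. Write $\tilde\psi=\ket{0}\!\bra{0}^{\otimes n}\otimes\ldots$, i.e. the column $\psi=\sum_i a_i\ket{i}$, and observe that the $\coWs$ layer sends $\ket{1}$ on the control to the sum of single excitations. Multiplication in $\boxtimes$ then corresponds to the convolution $\sum_{i\,\&\, j=0} a_ib_j\ket{i\,|\,j}$, which is exactly multiplication in $\polyring$. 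If that computation is awkward, we fall back on the diagrammatic monomial argument above.
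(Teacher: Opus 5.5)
Your proposal is correct, but the homomorphism part takes a different route from the paper. The bijectivity argument is essentially the paper's, since both rest on Proposition~\ref{prop:vec_pnf}; yours uses only the bijection $\psi\leftrightarrow\tilde\psi$ and is slightly cleaner than the paper's appeal to Theorem~\ref{thm:uni_pnf}.

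The paper proves the homomorphism property by induction on $n$:
after a direct diagrammatic check for $n=1$, it writes $p=p_0+p_1x_{n+1}$ with $p_0,p_1\in\polyring$;
it verifies that the diagram \eqref{eq:ctrlp} built from the PNFs of $p_0,p_1$ is a controlled state;
it then uses \eqref{eq:ih1} and \eqref{eq:ih2} to treat $p_0,p_1$ like scalars, mirroring $\mathcal{P}_{n+1}\cong\mathcal{P}_n[x_{n+1}]/(x_{n+1}^2)$.

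You instead prove additivity once and for all. You then use Lemma~\ref{lem:csdist} with commutativity of $\boxtimes$ to reduce multiplicativity to products of two monomials. Disjoint supports are handled by fusion, and overlapping supports by Lemma~\ref{lem:kill_quad}. For the overlapping case you also need (\ref{rule:K0}) to collapse the whole term to $\tilde{\mathbf 0}$, as the paper does in the proof of Theorem~\ref{thm:uni_pnf}.

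What each route buys: yours needs no induction and no auxiliary construction like \eqref{eq:ctrlp}, reuses the ring axioms of Section~\ref{sec:ring}, and pinpoints exactly where $x_k^2=0$ enters. The paper's route works with whole polynomials without expanding into monomial pairs, and follows the same even/odd recursion as Proposition~\ref{prop:vec_pnf}. Your semantic fallback is a complete proof on its own, since $\csring$ consists of linear maps, though it gives up the rewrite-level derivation the paper is after.

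One correction to that semantic sketch: the single excitations come from the top $W$ of $\boxplus$, not from the bottom layer. What drives $\boxtimes$ is that each bottom co-$W$ maps $\ket{00}\mapsto\ket{0}$, $\ket{01},\ket{10}\mapsto\ket{1}$ and $\ket{11}\mapsto 0$. So the $\ket{11}$ branch of the top $Z$ spider yields $\sum a_ib_j\ket{i\lor j}$, summed over pairs $i,j$ with disjoint binary supports. This is precisely the product in $\polyring$.
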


First, we shall define the map $\phi_n: \polyring \to \csring$ before proving it induces an isomorphism. $\phi_n$ is defined to map an arbitrary polynomial $p(x_1, ..., x_n) = a_0 + a_1x_n + ... + a_{2^n-1}x_1x_2...x_n$ to the PNF in equation (\ref{eq:pnf}).

Some important special cases are mapping scalars $a \in \mathbb{C}$ and indeterminates $x_i$:
 \begin{equation}
        \input{tikz/poly/homproof/homdef1.tikz} \quad \input{tikz/poly/homproof/homdef2.tikz}
\end{equation}
   
The proof that $\phi_n$ is a homomorphism resembles equations (\ref{eq:eg_add}) and (\ref{eq:eg_times}), but with greater generality. That $\phi_n$ is a bijection relies on Proposition \ref{prop:vec_pnf}. The full proof is found in Appendix~\ref{sec:isoproofs}.

\section{Completeness for Factoring Controlled Operators}
Instead of the indeterminates being complex numbers represented by $\lowerbox{\numberstate}$'s, we can let them be same-size matrices represented by controlled square matrix diagrams.
We then have that:
\begin{thm}\label{thm:ctrl_pnf}
    ZXW diagrams where the outputs of an arithmetic ZW diagram are each plugged into controls of same-size controlled matrices, are isomorphic to multivariate polynomials over same-size square matrices with complex number coefficients.
    The rules for their completeness are the same subset of ZXW rules used for completeness for arithmetic diagrams in the ZXW-calculus in \Cref{thm:uni_pnf}, plus the \ref{rule:CMcpy} and \ref{rule:CMcom} rules.
    When using the minimal set of rules from~\cite{deVisme2025minzw} for completeness for arithmetic diagrams, adding the \ref{rule:CMcpy} and \ref{rule:CMcom} rules is also a minimal rule set, i.e. none of the rules are derivable from the others.
\end{thm}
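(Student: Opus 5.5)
The plan is to reduce this to Theorem~\ref{thm:uni_pnf} by pushing the controlled matrices through the arithmetic diagram. Consider a diagram $D$ made of an arithmetic diagram $A$ with $n$ outputs, whose $k$th output is plugged into the control wire of a controlled matrix $\tilde{M_k}$. All $\tilde{M_k}$ act on the same $m$-qubit register, composed in sequence. First apply Theorem~\ref{thm:uni_pnf} to rewrite $A$ into PNF. This is legitimate because rewriting a subdiagram is compatible with plugging, and it uses exactly the arithmetic rules. Now $D$ consists of a single $\lowerbox[10]{\wspids}$ whose branches are $\lowerbox[10]{\zspids}$ nodes with coefficients $a_i$. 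These branches fan out via $\lowerbox{\coWs}$ to the variable wires, which feed $\tilde{M_1},\dots,\tilde{M_n}$.

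Next, push each $\tilde{M_k}$ upward through its $\lowerbox{\coWs}$ using \ref{rule:CMcpy}, which copies a controlled matrix across the co-W. The controls from different monomials are then attached to distinct outputs of the $\lowerbox{\zspids}$ nodes. For each monomial $\prod_{k\in S_i} x_k$, the $\lowerbox{\zspids}$ node joins the controls of $\tilde{M_k}$, $k\in S_i$, together with $\numberstate[a_i]$. By Proposition~\ref{prop:cmat_ops}, read in reverse, this is the controlled matrix $\widetilde{a_i\prod_{k\in S_i} M_k}$, with the product taken in the sequential order of the $\tilde{M_k}$ on the register. The top $\lowerbox[10]{\wspids}$ then combines these into $\widetilde{\sum_i a_i \prod_{k\in S_i} M_k}$, again by Proposition~\ref{prop:cmat_ops}.

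The main obstacle is ordering. After copying, the copies of the $\tilde{M_k}$ sit along the register in an order that interleaves different monomials, whereas Proposition~\ref{prop:cmat_ops} needs each monomial's factors contiguous. I would handle this with \ref{rule:CMcom}: controlled matrices whose controls hang off distinct branches of a common $\lowerbox{\wspids}$ act on mutually exclusive sectors, so they commute. This lets me bubble-sort the copies into monomial blocks. Within a block the order must be preserved, since the $M_k$ need not commute. I must check that the relative order inherited from the original composition is exactly the order that makes the map to noncommutative monomials well defined.

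The resulting normal form is a sum over subsets $S$, each term being an ordered product in which every variable appears at most once, with coefficients in $\mathbb{C}$. Two such diagrams are equal as polynomials if and only if their coefficients agree, so every diagram reduces to this unique form and completeness follows. Repeated variables need care: since $x_k^2=0$ arises via Lemma~\ref{lem:kill_quad}, "polynomials" here are understood modulo that relation, as in Theorem~\ref{thm:iso}. Independence from the other rules shows minimality. \ref{rule:CMcpy} is the only rule that interacts a controlled-matrix generator with a co-W, and \ref{rule:CMcom} is the only one that reorders two such generators. I would exhibit interpretations, for example by reinterpreting the controlled-matrix generators nonstandardly, that satisfy all other rules but falsify each rule in turn. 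Minimality of the arithmetic rules is inherited from~\cite{deVisme2025minzw}.
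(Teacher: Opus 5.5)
Your route is the paper's: normalise the arithmetic part with Theorem~\ref{thm:uni_pnf}, remove the co-W's with \ref{rule:CMcpy}, then sort with \ref{rule:CMcom}. You are also right that \ref{rule:CMcom} never swaps two factors of the same sector, so each sector keeps the order inherited from the register.

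The gap is in the final claim that every diagram reduces to a unique form. You index terms by subsets $S$ of output wires and leave the order of the blocks unspecified. You also read the result as polynomials modulo $x_k^2=0$. But that nilpotency lives on the control wires, not on the matrices. If two outputs feed copies of the same $M$, the monomial $x_1x_2$ yields $M^2$, and the statement concerns arbitrary noncommutative polynomials in the matrix variables.

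Because \ref{rule:CMcpy} changes the number of controlled matrices and \ref{rule:CMcom} changes their order, provably equal diagrams can have different register layouts, and then your forms differ. For example, take $2x$ plugged into a single $\tilde{M}$, and $x_1+x_2$ plugged into two successive copies of $\tilde{M}$. Both denote $\widetilde{2M}$. Yet your procedure yields one block with coefficient $2$ in the first case and two blocks with coefficient $1$ in the second. Comparing coefficients only makes sense for diagrams with the same sequence of controlled matrices, which reduces the claim to Theorem~\ref{thm:iso} for a fixed register. Completeness therefore does not follow.

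The missing idea is to key each sector by its word in the matrices rather than by a subset of outputs. As in the paper's proof, fix an (arbitrary) order on the matrix variables and use \ref{rule:CMcom} to sort the sectors lexicographically by word. For this to give a unique form, you must also collect sectors carrying the same word:
\begin{enumerate}
\item interleave their corresponding copies using \ref{rule:CMcom};
\item fuse each pair under a co-W by reading \ref{rule:CMcpy} right to left;
\item recompute the single coefficient on the arithmetic part with Theorem~\ref{thm:uni_pnf}.
\end{enumerate}
Zero-coefficient sectors created by the padding step should not survive as words, for instance by undoing the padding with \eqref{eq:zerobox} before copying.

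For minimality you do not need to construct separating interpretations; syntactic invariants suffice, as in the paper. \ref{rule:CMcpy} is the only rule that changes the number of controlled-matrix generators. \ref{rule:CMcom} is the only one that changes the relative order of two distinct matrices along the register, since \ref{rule:CMcpy} only splits or fuses adjacent copies of the same matrix.
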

\begin{proof}
    The proof is by deriving a unique normal form, having fixed an (arbitrary) order on the same-size square matrix variables.
    The algorithm to arrive at this normal form starts by rewriting the diagram sans the controlled square matrices to PNF, by the procedure of \Cref{thm:uni_pnf}.
    Next, remove all $\lowerbox{\coWs}$'s by using \eqref{rule:CMcpy} to make copies of the controlled square matrices.
    Finally, use \eqref{rule:CMcom} to commute controlled square matrices whose controls act on mutually exclusive sectors past each other.
    The algorithm terminates with the controlled square matrix terms ordered by their mutually exclusive sectors in lexicographical order with respect to the order of their variables.

    To show minimality, we can consider that the \ref{rule:CMcpy} and \ref{rule:CMcom} rules are the only rules here that involve the controlled square matrix generator, and so they cannot be derivable from the other rules here.
    The \ref{rule:CMcpy} rule can change the total number of controlled square matrices in the diagram, while the \ref{rule:CMcom} rule cannot.
    On the other hand, the \ref{rule:CMcom} rule can apply to two different controlled square matrices, changing their order where the outputs of one are inputs to the other; the \ref{rule:CMcpy} rule applies to two controlled square matrices only if they are of the same matrix.
\end{proof}
\begin{remark}
    These procedures guarantee that controlled operators in mutually exclusive sectors are commutable past each other.
    This guarantee does not apply to controlled operators in the same sector, as this would require additional information about the operators themselves beyond being controlled operators.
\end{remark}

\subsection{Factorising Hamiltonians}
To present a small working example, we leverage both our rewrite rules for arithmetic ZW diagrams, and for controlled diagrams, to \emph{factor} them.  For example, for same-size square matrices $I, A, B$ and $a, b, c \in \mathbb{C}$:
\begin{gather}
    \input{tikz/poly/matspoly.tikz}
\end{gather}
Factoring Hamiltonians is important to optimise quantum algorithms for chemistry and physics simulations. However, previous graphical rewrites for factoring Hamiltonians had only been doable for Hamiltonians with concretely-specified matrix terms~\cite{shaikh2022sum}. This completeness result guarantees that for any Hamiltonian, even if its matrix terms are black-box, these graphical rewrite rules are capable of deriving any of its possible factorisations.

Recently, one of the new rules for controlled diagrams first proposed in this work, \eqref{rule:CMcpy}, was used in Ref.~\cite{mcdowallrose2025f2q} to derive a general form for any two-body fermionic Hamiltonian in second quantisation under any linear encoding.

\section{Conclusion}
To conclude, we proved completeness for all controlled $n$-partite states, which we showed form a commutative ring isomorphic to multilinear polynomials. Also, we showed that all controlled $n$-qubit square matrices form a non-commutative ring. Furthermore, we have completeness for plugging controlled states into the control wires of controlled diagrams, isomorphic to all multivariate polynomials over same-size square matrices, with application to factoring Hamiltonians. When the controls target mutually exclusive sectors, a rewrite rule can be applied to copy any controlled diagram, and thereby factor any Hamiltonian.

We have shown that every (controlled) state computes a polynomial; hence, we can interpret a universal fragment of the ZXW-calculus as corresponding to arithmetic circuits. This generalises Ref.~\cite{carette2023compositionality}, which found an algebraic interpretation of a certain fragment of ZW calculus.
This line of reinterpreting quantum circuits as computing polynomials rather than unitary matrices offers a new perspective on quantum computation.

In another direction, we can apply completeness for polynomials isomorphic to controlled states to study entanglement. It can be easily shown diagrammatically that the polynomials corresponding to entangled (non-separable) states are exactly those that cannot be factored into irreducibles containing only variables corresponding to Alice's subsystem or only corresponding to Bob's subsystem. Since there are efficient algorithms for polynomial factorisation ~\cite{forbes2015complexity}, this gives rise to a novel entanglement classification algorithm for pure states. Further developing this into a more refined algebraic theory of entanglement, building on the work in Ref.~\cite{Agnew2023Masters}, could offer further insights.

The natural next step is extend our results to controlled \textit{qudit} diagrams.
While the diagrams being controlled are over qudits, we can consider control in the qubit subspace, as done in the ZXW-calculus completeness proof for any qudit dimension~\cite{poor2023completeness}.
A starting guess would be that qudit controlled states are isomorphic to polynomials $\mathbb{C}^{d-1}[x_1,...,x_n]/({x_1}^d,...,{x_n}^d)$ due to the Hopf law between Z and W.
Qudit multiple-control would likely have more complex structure than the qubit case here, considering the constructions for all prime-dimensional $d$-ary classical reversible gates and W states built in Refs.~\cite{Yeh2022QutritCtrl, Roy2023quditzh, Yeh2023quditW}.

Last, we are interested in exploring how to embed these new semantics for quantum controlled states and matrices into a functional programming language like in Ref.~\cite{rennela2020clctrllinlogic}, and to interpret our diagrams in mixed-state quantum mechanics and relate them to the operational and denotational semantics for coherent control of Ref.~\cite{barsse2025qplcoherent} and to the channel construction of Ref.~\cite{deFelice2026dataflow}.
We would like to try sector-preserving channels~\cite{Vanrietvelde2021ctrlsector} and scoped effects~\cite{lindley2024scoped} as approaches to better formulate the semantics of multiple-control.
We are also curious about reconciling the interpretation of diagrammatic differentiation of our arithmetic polynomial circuits by the approach in Ref.~\cite{wilson2023diffpolycirc}, with that of quantum circuits and ZX diagrams in Refs.~\cite{toumi2021diagdiff, wang2022diffintzx, jeandel2024adddiffzx}.

\section{Acknowledgements}
We thank Bob Coecke for motivating our exploration of these topics. We are grateful to all the collaboration with Matt Wilson in discovering the functorial properties of control. We thank Razin Shaikh, Harny (Quanlong) Wang, Itai Leigh, Tim Forrer, Aleks Kissinger, Frank (Peng) Fu, and Maxim van den Berg for insightful discussions. We also thank the anonymous reviewers who have given feedback on this work for their helpful comments.
LY would like to thank the Google PhD Fellowship, and the Basil Reeve Graduate Clarendon Scholarship at Oriel College, for PhD funding.
RY would like to thank Simon Harrison for his generous support via the Wolfson Harrison UKRI Quantum Foundation Scholarship.

\newpage

\newpage
\appendices
\section{Proofs for Section~\ref{sec:ctrlmap}}\label{sec:ctrlmapproofs}

\textbf{Proof of Proposition \ref{prop:ctrl_comp_h}}
\begin{proof}
    $\Control$ is sound with regards to sequential composition:
    \begin{equation}
        \input{tikz/func/F_comp.tikz}
    \end{equation}

    Where $(*)$ follows from Proposition \ref{prop:cmat_ops}.
\end{proof}

\textbf{Proof of Proposition \ref{prop:ctrl_comp_v}}
\begin{proof}
    Consider the morphism: $\phi_{V, W}: \Control(V) \otimes \Control(W) \to \Control(V \otimes W)$:
    \begin{equation}
       \phi_{V, W} = \input{tikz/func/F_lax_def.tikz}
    \end{equation}

    By conjugating by $\phi$, $\Control$ is sound under parallel composition of any $M_1: V \to V,\; M_2: W \to W$:
    \begin{equation}
        \input{tikz/func/ctrl_comp.tikz}
    \end{equation}

    Parallel composition is associative by associativity of the tensor product and of $\phi$:
    \begin{equation}
       \input{tikz/func/F_lax_ass.tikz}
    \end{equation}
\end{proof}

\textbf{Proof of Proposition \ref{prop:FF}}
\begin{proof}
    Plugging basis states:
    \begin{equation}
        \input{tikz/func/FF1.tikz}
    \end{equation}
    \begin{equation}
        \input{tikz/func/FF2.tikz}
    \end{equation}
\end{proof}

The AND gate is a primitive in the ZH-calculus~\cite{backens2019zh}. We can also represent the binary AND gate in the ZXW-calculus, by deMorgan's law of the diagram for binary OR given in \Cref{prop:vec_pnf}.
\begin{lemma}\label{lemma:and}
    \begin{equation}
        \input{tikz/func/and_def.tikz}
    \end{equation}
\end{lemma}
\begin{proof}
    We can verify this computes the AND gate by computing on basis states.
    \begin{equation}\label{eq:and1}
        \input{tikz/func/and1.tikz}
    \end{equation}
    Thus $AND(1, x) = x$. Since the diagram is clearly commutative, it remains to check $AND(0, x) = 0$.
    \begin{equation}\label{eq:and0}
        \input{tikz/func/and0.tikz}
    \end{equation}
\end{proof}

\section{Basic Lemmas}\label{sec:applem}
\begin{lemma}
    As a special case of~\cite[Lemma 38]{poor2023completeness} for the qubit setting, we have:
    \begin{equation}\label{eq:wid}
        \input{tikz/lemmas/wid.tikz}
    \end{equation}
\end{lemma}

\begin{lemma}
    The below follows from~\cite[Lemma 37]{poor2023completeness}, $\ref{rule:K0}$, and~\cite[Lemma 3.23]{wang2022algebraiczx}:
    \begin{equation}\label{eq:wont}
        \input{tikz/lemmas/wont.tikz}
    \end{equation}
\end{lemma}

\begin{lemma}
\begin{equation}\label{eq:inner}
	\input{tikz/lemmas/inner_statement.tikz}
\end{equation}
\end{lemma}
\begin{proof}
\begin{equation}
\input{tikz/lemmas/inner.tikz}
\end{equation}
\end{proof}

\begin{lemma}
    \begin{equation}\label{eq:xcpy}
        \input{tikz/lemmas/xcpy_statement.tikz}
    \end{equation}
    \end{lemma}
    \begin{proof}
        \begin{equation}
        \input{tikz/lemmas/xcpy.tikz}
    \end{equation}
    \end{proof}

  \begin{lemma}
      \begin{equation}\label{eq:zerobox}
          \input{tikz/lemmas/zerobox_statement.tikz}
      \end{equation}
  \end{lemma}
  \begin{proof}
      \begin{equation}
          \input{tikz/lemmas/zerobox.tikz}
      \end{equation}
  \end{proof}
  
  \begin{lemma}
      \begin{equation}\label{eq:cp_add}
          \input{tikz/lemmas/cpadd_statement.tikz}
      \end{equation}
  \end{lemma}
  \begin{proof}
      \begin{equation}
      \input{tikz/lemmas/cpadd.tikz}
  \end{equation}
  \end{proof}
  
  \begin{lemma}
    \begin{equation}\label{eq:x-x=0}
      \input{tikz/lemmas/cp_add2_statement.tikz}
    \end{equation}
  \end{lemma}
  \begin{proof}
    \begin{equation}
        \input{tikz/lemmas/cp_add2.tikz}
    \end{equation}
  \end{proof}

  \begin{lemma}
    \begin{equation}\label{eq:cpk_add}
      \input{tikz/lemmas/cpk_add_statement.tikz}
    \end{equation}
  \end{lemma}
  \begin{proof}
    \begin{equation}
      \input{tikz/lemmas/cpk_add.tikz}
    \end{equation}
  \end{proof}
  
\textbf{Proof of Lemma \ref{lem:kill_quad}}
\begin{proof}
\begin{equation}
    \input{tikz/lemmas/kill_quad.tikz}
\end{equation}
\end{proof}  
  
  \begin{lemma}
    \begin{equation}\label{eq:dbl_dist}
    \input{tikz/lemmas/dbl-dist-statement.tikz}
  \end{equation}
  \end{lemma}
  \begin{proof}
    \begin{equation}
    \input{tikz/lemmas/dbl-dist.tikz}
  \end{equation}
  \end{proof}
  
  \begin{lemma}
    \begin{equation}\label{eq:dist_circ}
      \input{tikz/lemmas/dist_statement.tikz}
    \end{equation}
  \end{lemma}
  \begin{proof}
    \begin{equation}
      \input{tikz/lemmas/dist.tikz}
    \end{equation}
  \end{proof}

  


  \begin{lemma}
    \begin{equation}\label{eq:0times}
        \input{tikz/lemmas/0times_statement.tikz}
    \end{equation}
  \end{lemma}
  \begin{proof}
    \begin{equation}
        \input{tikz/lemmas/0times.tikz}
    \end{equation}
  \end{proof}

\section{Proofs for Section~\ref{sec:ring}}\label{sec:ringproofs}
\textbf{Proof of Lemma \ref{lem:csq_copy}}
\begin{proof}
    First of all, using (\ref{rule:BZW}) we can rewrite the LHS to
    \begin{equation}
        \input{tikz/con/csq_dcopy2.tikz}
    \end{equation}

    Then clearly 
    \begin{equation}
        \input{tikz/con/csq_dcopy3.tikz}
    \end{equation}

    Meanwhile, 
    \begin{equation}
        \input{tikz/con/csq_dcopy4.tikz}
    \end{equation}

    Thus the two sides are equal over the Z basis and so are equal as diagrams.
\end{proof}

\textbf{Proof of Lemma \ref{lem:csqaddcomm}}
\begin{proof}
    We prove by plugging red and commutativity of matrix addition. By definition of controlled matrices, plugging $\lowerbox{\redzero}$ gives $\lowerbox{\redzero}$ and $I_n$ on both sides. Meanwhile, plugging $\lowerbox{\redpi}$ gives:
    \begin{equation}
       \input{tikz/con/csq_add_comm.tikz}
   \end{equation}
\end{proof}

\textbf{Proof of Lemma \ref{lem:csq_inv}}
\begin{proof}
    \begin{equation}
        \input{tikz/con/csq_add_inv.tikz}
    \end{equation}
\end{proof}

\textbf{Proof of Lemma \ref{lem:csq_dist}}
\begin{proof}
    \begin{equation}
        \input{tikz/con/csq_dist.tikz}
    \end{equation}
\end{proof}

\textbf{Proof of Lemma \ref{lem:cscopy}}
\begin{proof}
    As before, plugging $|0\rangle$ gives
    \begin{equation}
        \input{tikz/con/cs_copy1.tikz}
    \end{equation}

    Meanwhile, plugging $|1\rangle$ gives
        \begin{equation}
        \input{tikz/con/cs_copy2.tikz}
    \end{equation}
\end{proof}

\textbf{Proof of Lemma \ref{lem:csaddcomm}}
\begin{proof}
    \input{tikz/con/cs_add_comm.tikz}
\end{proof}

\textbf{Proof of Lemma \ref{lem:csaddid}}
\begin{proof}
    It is clear that $\lowerbox[10]{\zeroproj}$ is the controlled state $\tilde{\mathbf{0}}$. 
    
    Then we have:
    \begin{equation}
        \input{tikz/con/cs_add_id.tikz}
    \end{equation}

\end{proof}

\textbf{Proof of Lemma \ref{lem:csaddinv}}
\begin{proof}
    $\tilde{\psi} \circ \raisebox{-5pt}{\numbergate[-1]}$ is still a controlled state since $\raisebox{-5pt}{\numbergate[-1]}$ does nothing to $\raisebox{-5pt}{\redzero}$. Then $\tilde{\psi} \circ \raisebox{-5pt}{\numbergate[-1]}$ inverts $\tilde{\psi}$ since:
    \begin{equation}
        \input{tikz/con/cs_add_inv.tikz}
    \end{equation}
\end{proof}

\textbf{Proof of Lemma \ref{lem:csdist}}
\begin{proof}
    \begin{gather} 
        \tilde{\psi_1} \boxtimes (\tilde{\psi_2} \boxplus \tilde{\psi_3}) ~=~ \input{tikz/distproof/d1.tikz} ~
        \eqq{\ref{rule:BZW}} ~ \input{tikz/distproof/d2.tikz} \neweqline ~=~ \input{tikz/distproof/d3.tikz} ~
        \eqq{\ref{rule:CScpy}} ~ \input{tikz/distproof/d4.tikz} \neweqline ~=~ \input{tikz/distproof/d5.tikz}
        ~=~ \input{tikz/distproof/d6.tikz} \neweqline ~=~ (\tilde{\psi_1} \boxtimes \tilde{\psi_2}) \boxplus (\tilde{\psi_1} \boxtimes \tilde{\psi_3})
    \end{gather} 

\end{proof}

\section{Proofs for Section~\ref{sec:iso}}\label{sec:isoproofs}
\textbf{Proof of Proposition \ref{prop:vec_pnf}}
\begin{proof}
    We prove by induction on $n$.
    
    For the base case, $n=0$. The only PNF with no outputs is a number so we have: $$\numberstate[a_0] = \begin{bmatrix}
        1 & a_0
    \end{bmatrix}$$ as desired.
    
    For inductive hypothesis, we assume that Proposition~\ref{prop:vec_pnf} holds for every PNF on $n$ outputs. We use this hypothesis to extend it to PNFs with $n+1$ outputs. 
    
    Let $D$ be an arbitrary PNF with $n+1$ outputs. Firstly, observe that $x_{n+1}$ is connected to only the odd coefficients $\{a_{2k+1}\}$ since these are exactly the indices with $1$ in the least significant bit. Thus we can rewrite:
    \begin{gather}
        \raisebox{-20pt}{\ddiag} ~=~ \input{tikz/poly/lemmas/uni10.tikz} \neweqline ~=~ \input{tikz/poly/lemmas/uni11.tikz} \neweqline ~\eqq{\ref{rule:BZW}}~ \input{tikz/poly/lemmas/uni12.tikz} \neweqline 
        ~=~ \input{tikz/poly/lemmas/uni13.tikz}
    \end{gather}
    Where $D_{even}, D_{odd}$ are PNF diagrams. Since they are over $n$ variables, we can apply the inductive hypothesis and obtain:
    \begin{equation}\tag{*}
        D_{even} = \begin{bmatrix}
            1 & a_0 \\ 0 & a_2 \\ \vdots & \vdots \\ 0 & a_{2^{n+1}-2}
        \end{bmatrix}, \qquad \qquad
        D_{odd} = \begin{bmatrix}
            1 & a_1 \\ 0 & a_3 \\ \vdots & \vdots \\ 0 & a_{2^{n+1}-1}
        \end{bmatrix}
    \end{equation}

    Next, plugging red we observe:
    \begin{equation}
        \input{tikz/poly/lemmas/uni2.tikz}
    \end{equation}
    Meanwhile,
    \begin{equation}
        \input{tikz/poly/lemmas/uni3.tikz}
    \end{equation}

    Summing these together,
    \begin{gather}
        \input{tikz/poly/lemmas/uni4.tikz}
        \neweqline ~=~ (D_{even} \otimes \ketz) + (D_{odd}\keto \brao \otimes  \keto) \neweqline \eqq{*}~ \begin{bmatrix}
            1 & a_0 \\ 0 & a_2 \\ \vdots & \vdots \\ 0 & a_{2^{n+1}-2}
        \end{bmatrix} \otimes \ketz + \begin{bmatrix}
            0 & a_1 \\ 0 & a_3 \\ \vdots & \vdots \\ 0 & a_{2^{n+1}-1}
        \end{bmatrix} \otimes \keto  \neweqline
        ~=~ \begin{bmatrix}
            1 & a_0 \\ 0 & 0 \\ 0 & a_2 \\ 0 & 0 \\ \vdots & \vdots \\ 0 & a_{2^{n+1}-2} \\ 0 & 0
        \end{bmatrix} + \begin{bmatrix}
            0 & 0 \\ 0 & a_1 \\ 0 & 0 \\ 0 & a_3 \\ \vdots & \vdots \\ 0 & 0 \\ 0 & a_{2^{n+1}-1}
        \end{bmatrix} 
        ~=~ \begin{bmatrix}
            1 & a_0 \\ 0 & a_1 \\ 0 & a_2 \\ 0 & a_3 \\ \vdots & \vdots \\ 0 & a_{2^{n+1}-2} \\ 0 & a_{2^{n+1} -1}
        \end{bmatrix}
    \end{gather}

    Completing the inductive step.
\end{proof}

\textbf{Proof of Theorem \ref{thm:uni_pnf}}

\begin{proof}
    Let $A$ be an arithmetic diagram. If $A = \numberstate$, we are done. 
    
    Otherwise, $A$ has at least one output. First, we shall rewrite $A$ into three layers, consisting of: (1) a single W at the top, (2) a layer of $\raisebox{-5pt}{\zspids}$ and (3) a layer of $\numberstate$'s and $\raisebox{-5pt}{\coWs}$'s. Then we shall collect terms and order the boxes to produce a PNF. 

    If the top of $A$ is not already $\raisebox{-5pt}{\wspids}$, it must be $\raisebox{-5pt}{\zspids}$. It cannot be $\numberstate$ since the remaining arithmetic diagram would then have no inputs which is impossible. It cannot be $\raisebox{-5pt}{\coWs}$ since there is only one input and arithmetic diagrams cannot contain $\ccap$. Thus we can rewrite:
    \begin{enumerate}[label={(\arabic*)}]
    \item $\input{tikz/poly/lemmas/algtop1.tikz}$
    \end{enumerate}

    (1) guarantees there is a W at the top. We shall now repeatedly apply rewrites underneath the W until there are exactly three layers. Assume that fusion is applied as much as possible between each stage and (\ref{eq:kill_quad}) is applied and simplified with (\ref{rule:K0}) to remove $\raisebox{-8pt}{\xsq}$ whenever possible. Then for as long as there are at least 4 layers, we can apply one of the following rewrites:
        \begin{enumerate}[resume, label={(\arabic*)}]
            \item $\input{tikz/poly/lemmas/algcases1.tikz}$
            \item $\input{tikz/poly/lemmas/algcases2.tikz}$
            \item $\input{tikz/poly/lemmas/algcases3.tikz}$
            \item $\input{tikz/poly/lemmas/algcases4.tikz}$
            \item $\input{tikz/poly/lemmas/algcases5.tikz}$
        \end{enumerate}

    \medskip
    
    Clearly, we can only stop applying these rules once $A$ is a sum of products of copies. Steps (2) and (3) ensure the top of $A$ has such a structure and steps (4) - (6) ensure that there is nothing beneath the $\lowerbox{\coWs}$'s . To see that this will always terminate, observe that (2) and (3) preserve the depth of $A$ while (4), (5), (6) all decrease it. (2) and (3) can only be applied a finite number of times before another simplification must be used. So repeatedly applying these rewrites must eventually shrink the depth down to $3$, as desired. Finally, to put $A$ in PNF we must:
    \begin{enumerate}[resume, label={(\arabic*)}]
        \item Collect terms: whenever there are two boxes connected to exactly the same set of $\raisebox{-5pt}{\coWs}$'s, use (\ref{eq:cpk_add}) to fuse them together. 
        \item Pad: use (\ref{eq:zerobox}) to insert $\raisebox{-5pt}{\numbergate[0]}$ for any connectivities that do not exist in $A$.
        \item Reorder: use (\ref{rule:Sym}) to reorder coefficients into the canonical order.
    \end{enumerate}

    Step (7) ensures that every $\raisebox{-5pt}{\zspids}$ has unique connectivity. Step (8) ensures there are exactly $2^n$ coefficients so that step (9) can order them in the appropriate way. 

    Thus $A$ has been written in PNF, completing the proof.
    
\end{proof}

\textbf{Proof of Theorem \ref{thm:iso}}
\begin{proof}
    
    First, we show $\phi_n$ is a homomorphism, i.e. \begin{equation}
        \forall p, q \in \polyring, \phi_n(p + q) = \phi_n(p) \boxplus \phi_n(q) ,\quad \phi_n(p \times q) = \phi_n(p) \boxtimes \phi_n(q)
    \end{equation}
    The strategy for the proof will be an induction on $n$. Here, the yellow shaded regions serve no purpose other than to make the subdiagram that was rewritten in the preceding or following step easier to visually identify.

    \medskip
    
    \textbf{Base case:}
    We have not defined controlled states for $n=0$, so the base case begins with $n=1$.
    Let $p, q \in \polyring[1]$. Write as $p(x_1) = a_0 + a_1x_1, q(x_1) = b_0 + b_1x_1$, where $a_0, a_1, b_0, b_1 \in \mathbb{C}$. Then since $p + q = a_0 + b_0 + (a_1 + b_1)x_1$,
    \begin{equation}
        \input{tikz/poly/homproof/hombaseadd.tikz}
    \end{equation}

    Meanwhile, since $p \times q = a_0b_0 + (a_0b_1 + a_1b_0)x_1$, 
    \begin{equation}
        \input{tikz/poly/homproof/hombasetimes.tikz}
    \end{equation}

    Completing the base case.

    \medskip

    \textbf{Inductive step:}

    Let $Hom(n)$ assert than $\phi_n$ is a homomorphism.  Then for the inductive step we wish to prove that $\forall n, Hom(n) \implies Hom(n+1)$.

    The proof relies on the recursive definition of $R[x_1, x_2] = R[x_1][x_2]$, for any ring $R$, to rewrite an arbitrary polynomial $p(x_1, ..., x_{n+1}) = a_0 + a_1x_{n+1} + ... + a_{2^{n+1}-1}x_1x_2...x_{n+1} \in \polyring[n+1]$ as $p(x_{n+1}) = p_0 + p_1x_{n+1}$, where $p_0, p_1 \in \polyring$. This allows the $p_i$ to be treated similarly to the scalars in the base case. To emphasise this, they will be drawn in green boxes. To help distinguish when an operation is covered by the inductive hypothesis, the wires for variables $x_1, ..., x_n$ will be drawn in blue, while the $x_{n+1}$ wires will be drawn in black. Thus the inductive hypothesis states that:
    \begin{equation}
        \input{tikz/poly/homproof/ih1.tikz}
        \tag{IH1}\label{eq:ih1}
    \end{equation}
    \begin{equation}
        \input{tikz/poly/homproof/ih2.tikz}
        \tag{IH2}\label{eq:ih2}
    \end{equation}

    Let $p(x_{n+1}) = p_0 + p_1x_{n+1}$ and $q(x_{n+1}) = q_0 + q_1x_{n+1}$, where $p_0, p_1, q_0, q_1 \in \polyring$.
    According to our hypothesis,
    \begin{equation}
        \phi_{n\plu 1}(p) = \phi_{n\plu 1}(p_0) \boxplus \left(\phi_{n\plu 1}(p_1) \boxtimes \phi_{n\plu 1}(x_{n\plu 1}) \right)
    \end{equation}
    and likewise for $\phi_{n\plu 1}(q)$. We first verify correctness of the constructed controlled diagram
    \begin{equation}\label{eq:ctrlp}
        \input{tikz/poly/homproof/ctrlp.tikz}
    \end{equation}
    because it results in $\ket{0...0}$ when the control is $\ket{0}$, and in the state corresponding to $p$ when the control is $\ket{1}$.
    \begin{equation}\label{eq:ctrlp0}
        \input{tikz/poly/homproof/ctrlp0.tikz}
    \end{equation}
    \begin{equation}\label{eq:ctrlp1}
        \input{tikz/poly/homproof/ctrlp1.tikz}
    \end{equation}
    
    Applying this in the inductive step for constructing a controlled diagram of $p + q$ from those of $p$ and $q$:
    \begin{equation}
        \input{tikz/poly/homproof/sa.tikz}
    \end{equation}

    Similarly, for multiplication:
    \begin{equation}
        \input{tikz/poly/homproof/st.tikz}
    \end{equation}

    \medskip

    This completes the inductive step, proving that $\forall n > 1$, $\phi_n$ is a homomorphism.

    \bigskip

    Finally, to see $\phi_n$ is an isomorphism, we use Theorem \ref{thm:uni_pnf} to write an arbitrary controlled state in PNF:
    \begin{gather}
        \begin{bmatrix}
            1 & a_0 \\ 0 & a_1 \\ \vdots & \vdots \\ 0 & a_{2^{n}-1}
        \end{bmatrix}
        \quad = \quad \input{tikz/poly/pnf.tikz}
    \end{gather}

    Then all we have to do is interpret it as the image of a polynomial:
    \begin{gather}
        \input{tikz/poly/pnf.tikz} \quad ~=~ \quad \input{tikz/poly/homproof/iso4.tikz} \neweqline ~=~ \phi_{n}(a_0) + \phi_{n}(a_1x_{n}) + ... + \phi_{n}(a_{2^{n}-1} x_1x_2...x_{n}) \neweqline ~=~ \phi_{n}(a_0 + a_1x_{n} + ... + a_{2^{n}-1}x_1x_2...x_{n}) 
    \end{gather}
\end{proof}

\end{document}